\DeclareSymbolFont{symbolsC}{U}{pxsyc}{m}{n}
\DeclareMathSymbol{\medcircle}{\mathbin}{symbolsC}{7}
\crefname{equation}{}{} 
\colorlet{refkey}{orange!20}
\colorlet{labelkey}{blue!30}
\numberwithin{equation}{section}
\newtheorem{theorem}{Theorem}[section]
\newtheorem{lemma}[theorem]{Lemma}
\newtheorem{claim}[theorem]{Claim}
\newtheorem{corollary}[theorem]{Corollary}
\newtheorem*{question*}{Question}
\theoremstyle{definition}
\newtheorem{definition}[theorem]{Definition}
\newtheorem*{definition*}{Definition}
\newtheorem{example}[theorem]{Example}
\theoremstyle{remark}
\newcommand{\mb}{\mathbb}
\newcommand{\mbf}{\mathbf}
\newcommand{\mc}{\mathcal}
\newcommand{\on}{\operatorname}
\newcommand{\bx}{\mathbf{x}}
\newcommand{\by}{\mathbf{y}}
\newcommand{\bp}{\mathbf{p}}
\let\originalleft\left
\let\originalright\right
\renewcommand{\left}{\mathopen{}\mathclose\bgroup\originalleft}
\renewcommand{\right}{\aftergroup\egroup\originalright}
\newif\ifpublic
\newcommand{\ignore}[1]{}
\newcommand{\cV}{{\mathcal V}}
\newcommand{\cC}{{\mathcal C}}
\newcommand{\cD}{{\mathcal D}}
\title{Computation of Approximately Stable Committees in Approval-based Elections}
\author{Drew Gao \\ Stanford University \and Yihang Sun \\ Stanford University \and Jan Vondr\'{a}k \\ Stanford University}
\begin{document}

\maketitle
\begin{abstract}
    Approval-based committee selection is a model of significant interest in social choice theory. In this model, we have a set of voters $\cV$, a set of candidates $\cC$, and each voter has a set $A_v \subset \cC$ of approved candidates. For any committee size $K$, the goal is to choose $K$ candidates to represent the voters' preferences. We study a criterion known as \emph{approximate stability}, where a committee is $\lambda$-approximately-stable if there is no other committee $T$ preferred by at least $\frac{\lambda|T|}{k} |\cV| $ voters. We prove that a $3.65$-approximately stable committee always exists and can be computed algorithmically in this setting.    
    Our approach is based on finding a Lindahl equilibrium and sampling from a strongly Rayleigh distribution associated with it.
    \end{abstract}

\section{Introduction} 
The field of social choice theory is concerned with how individual preferences can be aggregated into a collective decision via a \emph{social choice function} (or \emph{voting rule}). Social choice theory seeks to formalize notions such as fairness and proportional representation, and to answer questions such as: How can we formally define a fair or optimum outcome? How can we ensure that no subset of voters feels that they have not been represented in the outcome? What assumptions need to be made on the structure of preferences in order for a desirable outcome to exist or be efficiently computable? 

In the classical single-winner setting, this is modeled as a collection of \emph{voters} $\cV$ who must choose one winner from a set of \emph{candidates} $\cC$ (also referred to as \emph{alternatives}), and each voter reports a ranked preference order over the candidates. These preference orders are then used as input to some \emph{voting rule}, a function which outputs a winning candidate in $\cC$. Unfortunately, classical impossibility results (e.g. Arrow's Impossibility Theorem \cite{arrow1950difficulty}) show the non-existence of social choice functions satisfying a collection of seemingly natural fairness criteria. These negative results for single-winner elections have motivated models where approximate guarantees can be proved in a certain sense. While we do not attempt to survey this substantial body of literature here, let us mention two notable recent papers for single-winner elections: choice of a single winner in metric spaces with only ordinal preferences available to the algorithm \cite{CRWW24}, and a choice of a list of $6$ candidates that in a certain sense beats any single candidate \cite{CLRV24}.


The single-winner model generalizes to scenarios where voters choose a subset of candidates as a committee rather than a single winner. This is often referred to as the \emph{committee selection problem}. A related model called \emph{participatory budgeting} is where alternatives have variable weights (costs of ``projects'') and voters want to choose a collection of projects within a certain budget. Participatory budgeting problems can be considered with divisible or indivisible items or projects. The committee selection problem can be viewed as a participatory budgeting problem with indivisible items, under a cardinality constraint.

\subsection{ABC elections.}
We consider committee selection problems where we choose a committee $S$ from a pool of candidates $\cC$ to represent a population of voters $\cV$. In this paper, we focus on the model of {\em Approval-Based Committee} elections, or ABC-elections. 

\begin{definition}[ABC Elections]
\label{def:abc}
$(\cV, \cC, \{A_v\}_{v\in \cV})$ is an instance of an \emph{approval-based committee (ABC) election}, if
$\cV$ is a finite set of \emph{voters},  $\cC$ is a finite set of \emph{candidates}, and $A_v \subset \cC$ is an \emph{approval set} for every $v\in \cV$. This induces a preorder $\succ_v$ (with possible ties) on $2^\cC$ for every voter $v$: we say $v$ \emph{strictly prefers $T$ to $S$}, or $T\succ_v S$, if
\[ \left|A_v\cap T\right|>\left|A_v\cap S\right|\]
An outcome of the election is a \emph{committee} $S\subset \cC$.
\end{definition}

More generally, $\succ_v$ could be any preorder on $2^{\cC}$ (usually assumed to be monotonic with respect to $\subset$). We do not consider the problem in this generality here.

Given a preference profile $\{A_v\}_{v \in \cV}$, we would like to select a committee of cardinality $K$ which is reflective of all voters preferences.  One of the oldest and most studied notions is that of \emph{core-stability}, which captures the idea that any coalition containing a certain proportion of voters, should be represented by the same proportion of candidates within the chosen committee. 

\begin{definition}[Core Stability]
In an ABC election $(\cV, \cC, \{A_v\})$ with $n=|\cV|$ voters, we say a committee $S\subset \cC$ is \emph{blocked} by a \emph{deviating committee} $T\subset \cC$ if
\begin{equation}
\label{eq:def-core-ord}
\left|\{v\in\cV:T\succ_v S\}\right| \geq \frac{|T|n}{|S|}
\end{equation}
and we call $\{v\in\cV:T\succ_v S\}$ the \emph{deviating coalition}. We say $S$ is \emph{core-stable} (or \emph{stable}) if it is not blocked by any other committee, i.e. for every $T\subset \cC$ we have
\begin{equation}
\label{eq:def-core}
\left|\{v\in\cV: T \succ_v S \}\right|
< \frac{|T|n}{|S|}
\end{equation}
The \emph{core} is the set of all core-stable committees.
\end{definition}

Let us illustrate the definition with a simple example.

\begin{example}
Let $\cC=\{a, b, c, d, e\}$, $\cV=[5]$, $k=3$ and $A_1=A_2=A_3=\{a, b, c\}, A_4=A_5=\{d, e\}$. Candidates $a,b,c$ are in some sense the most popular ones, but $S = \{a,b,c\}$ is not a stable committee, since $2$ voters out of $5$ prefer $T = \{ d \}$, and $\frac{|T|}{|S|} = \frac{1}{3}$. This illustrates the fact that a core-stable committee should represent every subset of voters, and in this case voters $\{4,5\}$ are not represented. There are multiple stable committees here, for example $S' = \{a,b,d\}$. 
\end{example}

The most significant open question in the study of ABC elections is whether the core is always non-empty. This is still open as of writing of this paper.

Given this difficulty, several approximate versions of core-stability have been proposed. We state the notion we study in this paper, and consider some related notions in \cref{sec:related-work}.

\begin{definition}[$\lambda$-approximate stability]
\label{def:approx-stable}
In an ABC election $(\cV, \cC, \{A_v\})$ with $n=|\cV|$ voters, we say a committee $S\subset \cC$ is \emph{$\lambda$-approximately stable} for $\lambda \ge 1$ if for every $T\subset \cC$
\begin{equation}
\label{eq:def-approx-core}
\left|\{v\in\cV:T \succ_v S \}\right| <
 \frac{\lambda |T|n}{|S|}.
\end{equation} 
\end{definition}

Sometimes this is also referred to as ``group-size approximate stability''. 
When $\lambda = 1$, this exactly recovers core-stability. 
In a major result, \cite{jiang2020approximately} proved that a $32$-approximately stable committee of given size exists for any election instance with monotone preferences. For approval-based (ABC) elections, their approach in fact implies a factor of $16$. However, their result was only existential, even for approval-based elections.

\subsection{Our Contributions and Techniques}
\label{sec:pf-sketch}
Our main result is the following improvement for approval-based elections. Importantly, the result is algorithmic, meaning we can compute the respective committee in polynomial time.

\begin{theorem}\label{thm:main}
For any ABC election and a given committee size $K$, there exists a $3.651$-approximately stable committee. Moreover, it can be computed in randomized polynomial time.
\end{theorem}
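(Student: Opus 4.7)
The plan is to follow the Lindahl-equilibrium-plus-rounding strategy hinted at in the abstract. The first step is to set up and solve a convex program whose optimal solution is a fractional Lindahl equilibrium: a vector $\mathbf{p} \in [0,1]^{\cC}$ with $\sum_c p_c = K$, together with nonnegative per-voter price shares on the candidates. The defining property I would extract is the fractional analogue of core stability: for every committee $T$, the total ``fractional utility'' voters could gain by deviating to $T$ is controlled by what they already receive under $\mathbf{p}$. This follows from the first-order (KKT) optimality conditions, essentially a min--max duality between voters and candidates, and the fractional solution can be computed in polynomial time using any standard convex solver, which will eventually give the polynomial-time guarantee in the theorem.

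The second step is to round $\mathbf{p}$ to an integer committee $S \subset \cC$ of size exactly $K$ by sampling from a strongly Rayleigh distribution whose marginals match $\mathbf{p}$. Such a distribution can be constructed, for instance, as a projection of a determinantal point process, or via a cardinality-preserving rounding (pipage/swap rounding on a partition matroid extended to the cardinality constraint) that provably preserves the strong Rayleigh property. The key gain is that for every linear test function, random variables derived from $S$ satisfy genuine Chernoff-type concentration, a property that generally fails for arbitrary negatively-correlated distributions.

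The third and fourth steps combine concentration with a union bound. For a fixed potential deviating committee $T$, the Lindahl property upper-bounds the expected number of voters who strictly prefer $T$ to $S$ by roughly $\frac{|T|n}{K}$, and the strongly Rayleigh Chernoff bound converts this into an exponentially small probability that the actual count exceeds $\lambda$ times its expectation. A union bound over all $T$ of size at most $K$ contributes a factor of at most $|\cC|^{K}$, which is absorbed by the per-$T$ exponential tail provided $\lambda$ is chosen large enough; optimizing the trade-off between the Chernoff exponent and the union-bound loss should produce the explicit constant $3.651$.

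The main obstacle I expect is making the concentration step tight enough to reach a single-digit constant rather than the $16$ that follows from the cruder argument of \cite{jiang2020approximately}. This likely requires a careful optimization of the strongly Rayleigh Chernoff bound, together with an argument that reduces the effective space of deviating committees $T$ to a small combinatorial class (for example, committees of a specific extremal size, or equivalence classes of voter--coalition pairs witnessed by the Lindahl dual), so that the exponent gain from concentration is not wasted against a full $|\cC|^{K}$ union bound. Linking the Lindahl dual prices to the Chernoff exponent in a way that makes this final trade-off tight is where I would expect most of the technical work to lie.
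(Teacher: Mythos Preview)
Your proposal has a genuine structural gap: the ``Chernoff plus union bound over all $T$'' strategy in steps 3--4 does not work, and the paper's argument is organized quite differently.

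First, for a fixed $T$, the size of the deviating coalition $|\{v: |A_v\cap T| > |A_v\cap S|\}|$ is a sum over voters of \emph{threshold} functions of the random set $S$, not a linear statistic, so the strongly Rayleigh Chernoff bound does not apply to it directly. Second, even granting some concentration, a union bound over $\binom{|\cC|}{\le K}$ deviating committees is far too loose to yield any constant: for small $|T|$ the allowed slack $\frac{\lambda|T|n}{K}$ is tiny, and there is no exponent of order $K$ to absorb the $|\cC|^K$ factor. Third, if you round with marginals equal to the Lindahl $\mathbf{x}$, then for each voter $\mathbb{E}[|S\cap A_v|]=u_v(\mathbf{x})$ and the probability of falling strictly below this is generically bounded away from $0$; you need to overshoot.

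The paper avoids all of this by working \emph{per voter}, not per $T$, and by recursing. It spends only a fraction $k=\eta K$ of the budget on a Lindahl equilibrium $\mathbf{x}$, then rounds a \emph{scaled} vector with marginals $\min\{\alpha x_i,1\}$ for $\alpha\ge 2$ from a strongly Rayleigh distribution. Because $|R\cap A_v|$ is then a generalized binomial with mean $\alpha u_v(\mathbf{x})$, sharp Poisson-type tail bounds control, for each voter, the probability that $|R\cap A_v|\le\lfloor u_v(\mathbf{x})\rfloor-\ell$ for $\ell=1,2$. Voters who are at least as happy with $R$ as with $\mathbf{x}$ inherit the Lindahl stability bound against any $T$; the small residual sets $\cV_1,\cV_2$ of unhappy voters are handled by a greedy top-up (adding candidates that satisfy many voters in $\cV_1\setminus\cV_2$) and then a \emph{recursive} call on the remaining voters with the remaining budget $K-(\alpha+\gamma)k$. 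The constant $3.651$ comes from solving a two-branch recursion in the parameters $(\alpha,\eta,\beta,\gamma)$ together with careful rounding of $k$ and a base case for small $K$, not from a Chernoff--union-bound trade-off.
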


In parallel work, \cite{nguyen-song} improved the approximate stability factor to $6.24$ for monotone preferences and $4.97$ for ABC elections. Moreover, \cite{nguyen-song} shows how to compute $9.94$-approximately stable committees for ABC elections in randomized polynomial time. We discuss the comparison between our works further below.

At a high level, we follow the paradigm of \cite{jiang2020approximately} where a fractional committee is constructed first, and then converted into a feasible committee via randomized rounding. Where we deviate from \cite{jiang2020approximately} is that instead of a lottery over committees satisfying guarantees in expectation, we consider a fractional solution determined by a {\em Lindahl equilibrium}. 
The Lindahl equilibrium has appeared in related work such as \cite{munagala2021approximate}, although not for the type of approximate stability considered here. 
The Lindahl equilibrium has the appealing property that it satisfies core stability in a fractional sense.
Moreover, a Lindahl equilibrium can be computed algorithmically thanks to a recent work of \cite{compute-lindahl}. 

A natural approach to converting the Lindahl equilibrium into a discrete solution is randomized rounding. However, an issue in constructing an approximately stable committee according to \cref{def:approx-stable} is that the guarantee for each voter is very brittle; a voter is dissatisfied with our committee even if a competing committee has one extra approved candidate. A different, utility-based type of approximation was studied by \cite{munagala2021approximate}; their results are orthogonal to ours, as we discuss further below.

We achieve the stability guarantee by comparing the quality of the rounded solution to a Lindahl equilibrium, and making sure that every voter is eventually at least as satisfied as they are in some Lindahl equilibrium. For this purpose, it is very convenient here to
consider {\em strongly Rayleigh distributions} \cite{borcea2008negativedependencegeometrypolynomials} arising from the Lindahl equilibrium (or more precisely, from the Lindahl equilibrium scaled by a factor $\alpha > 1$). These distributions satisfy strong negative correlation and concentration properties. More precisely, the utility function $u_v(S) = |S \cap A_v|$ for the rounded solution behaves like a generalized binomial distribution (summation of independent Bernoulli variables). This allows us to prove very tight tail bounds on the satisfaction of each voter, and compare the quality of the rounded solution to the Lindahl equilibrium. We remove the voters who are satisfied with our rounded solution at least as they are with the Lindahl equilibrium, since their contribution to a deviating coalition can be bounded appropriately. 

Furthermore, we include a greedy phase to improve the satisfication of some voters whose utility $u_v(S) = |S \cap A_v|$ is just below the utility they derive from the Lindahl equilibrium, $u_v(\bx) = \sum_{i \in A_v} x_i$. As long as many such voters approve the same candidate, we include such a candidate in our solution. This simple modification improves the performance of our algorithm significantly.
For voters, who are still not satisfied with the committee that we constructed, the solution is completed recursively. This is analogous to the high-level approach of \cite{jiang2020approximately}.
We provide a high-level description of our algorithm below to find an approximately stable committee of size $K$ in an ABC instance.

\begin{algorithm}
\renewcommand{\thealgocf}{}
\DontPrintSemicolon
\caption*{High Level Sketch of the Main Algorithm}
\label{alg:alg-sketch}

Choose $k$ depending on $K$, and compute a Lindahl equilibrium $(\mbf{x}, \mbf{p})$ such that $\Vert \mbf{x}\Vert_1=k$\;
Choose $\alpha\ge 2$ and construct a strongly Rayleigh distribution $\cD$ on $2^{\cC}$ with marginal probabilities $x'_i = \min \{ \alpha x_i, 1\}$ \;
Sample $R \sim \cD$ until we find $R$ that dominates $\mbf{x}$ approximately in some technical sense \;
$\cV_1 \gets$ voters who prefer $\mbf{x}$ to $R$ by at least one candidate \;
$\cV_2 \gets$ voters who prefer $\mbf{x}$ to $R$ by at least two candidates \;
For suitable parameters $\beta, \gamma\ge 0$ (possibly depending on $R$),
choose additional candidates greedily: as long as there is a candidate approved by at least $\beta k/n$ additional voters in $\cV_1 \setminus \cV_2$,  remove such voters from $\cV_1$ and include the candidate in $R$ \;
If we find $\gamma k$ such candidates, recurse on $\cV_1$ to complete the committee to size $K$; else, recurse on $\cV_2$ to complete the committee to size $K$.
\end{algorithm}


\paragraph{A note on the chronology of recent work.}
An earlier (unpublished) version of this paper, gave a weaker approximation factor of $11.66$ for ABC elections. The approach of \cite{nguyen-song}, providing a factor of $4.97$ for ABC elections and $6.24$ for general monotonic preferences was developed independently. The main difference between our approach and that of \cite{nguyen-song} is a different notion of a Lindahl equilibrium. Our approach uses the more standard definition of a Lindahl equilibrium. Due to a very recent work of Kroer and Peters \cite{compute-lindahl}, this equilibrium can be computed in polynomial time, which was not known at the time of our previous version. 

The equilibrium concept proposed in \cite{nguyen-song}, dubbed $\alpha$-LEO, is more involved but can be extended to more general models. To our knowledge, it is not known how to compute $\alpha$-LEO at the moment. In the case of ABC elections, the authors of \cite{nguyen-song} replace it by another relaxation which can be solved algorithmically, but incurs an additional factor of $2$, leading to a $9.94$-approximately stable committee. 
Recent improvements in our approach (discussed above) decreased the factor for ABC elections elections to $3.651$. Another technicality is that prior work often ignores rounding issues and thus the results are only valid asymptotically as $k \to \infty$. In this work, we handle rounding issues carefully and our results are valid for all $k$. The factor can be improved to $3.61$ if we only care about asymptotic performance.

\subsection{Other Related Work}
\label{sec:related-work}

It has been verified that the core is nonempty for all ABC elections with committee size $K \leq 8$, or number of candidates $m \leq 15$  \cite{peters2025core}. For $K \leq 7$, a stable committee can be found using the classical PAV rule \cite{peters2025core}. However, it is known that this  rule does not work in general, and in fact there is no rule using only the welfare vector ${\mathbf w}=(|S \cap A_v|)_{v\in \cV}$ guaranteed to find a core-stable commmittee \cite{peters2022proportionalitylimitswelfarism}. Additionally, it is known that for other classes of preferences such as ranking-based, the core can be empty \cite{Cheng_2019,jiang2020approximately}. 

ABC elections is a special case of more general models. One can define a utility function for each voter, in the case of ABC elections $u_v(S) = |S \cap A_v|$, and then define a preference order on $2^\mc{C}$ via $T \succ_v S$ if and only if $u_v(T) >u_v(S)$. Other popular choices of utility classes are: (i) additive, i.e. $u_v(S) = \sum_{i \in S} w_{vi}$ for $w_{vi} \geq 0$; (ii) ranking, usually defined via an ordering of candidates or equivalently $u_v(S) = \max_{i \in S} w_{vi}$; (iii) submodular, i.e. $u_v(S) + u_v(T) \geq u_v(S \cup T) + u_v(S \cap T)$, generalizing both of the previous two settings; and (iv) arbitrary monotonic functions $u_v(S)$.

As mentioned, a $32$-approximately-stable committee was proved to exist for any monotone preferences \cite{jiang2020approximately}. As an intermediate concept, it uses the notion of an approximately stable lottery.
 
\begin{definition}[$\lambda$-Approximately Stable Lottery]
In an ABC election with $n$ voters and target committee size $K$, a distribution $\Delta$ over committees is an \emph{$\lambda$-approximately stable lottery} if for any committee $T$ we have that $$\mb{E}_{S\sim\Delta}[|\{ v \in \cV: T \succ_v S\}|]< \frac{\lambda|T|n}{K}.$$ 
A \emph{stable lottery} is one satisfying this condition with $\lambda=1$.
\end{definition}

A stable lottery exists for ABC elections \cite{Cheng_2019}; however, this result is not algorithmic. More generally, \cite{jiang2020approximately} proved that a $2$-approximately stable lottery exists for every instance with monotonic preference orders, and this is the foundation of their result that a $32$-approximately stable committee always exists in this setting (again, this is only an existential result). We remark that by combining the stable lottery result for ABC election with the rounding technique of \cite{jiang2020approximately}, one can obtain a 16-approximately stable committee for ABC elections, although this is not explicitly stated in  \cite{jiang2020approximately}. The question of finding approximately stable committees for ABC elections algorithmically was left an open question in \cite{jiang2020approximately}.

\subsubsection*{Utility-based approximations.}
Let us mention briefly another approximation concept which has been studied in the context of utility-based elections. (We do not pursue this direction in this paper.)

\begin{definition}($\alpha$-satisfaction-approximate stability)
A committee $S \subseteq \cC$ is $\alpha$-satisfaction-approximately stable if there is no $T\subset \cC$ such that $u_i(T) > \alpha u_i(S)$ for a $\frac{|T|}{|S|}$-fraction of voters. A committee $S \subseteq \cC$ is $\alpha$-satisfaction-approximately stable with additament, if there is no $T\subset \cC$ such that $u_i(T) > \alpha u_i(S \cup \{q\})$ for every $q\in \cC$ for a $\frac{|T|}{|S|}$-fraction of voters.
\end{definition}

Note that here, the approximation factor applies to the utility function rather than the size of the deviating coalition. For this notion of approximation, it is known that a $2$-satisfaction-approximately stable committee exists for ABC elections \cite{peters2022proportionalitylimitswelfarism}. This result is based on a discrete local search using the ``PAV rule'' (dating back to 1895!) and is algorithmically efficient. More generally, for the case of submodular utilities, \cite{munagala2021approximate} proves that there is a $67.37$-satisfaction-approximately stable committee, and it can be computed in polynomial time. 
The same work also improve this factor to $9.27$ for additive utility functions.

Other criteria have been defined to formalize the idea of proportional representation committee elections. A committee $S$ satisfies {\em justified representation} if for any group of voters of size at least $\frac{n}{k}$ who all share at least $1$ approved candidate, each voter in this group has at least $1$ approved candidate in $S$. Justified representation, as well as the related but stronger notions of extended justified representation and proportional justified representation, are satisfied by any committee in the core. However, unlike the core, committees satisfying proportional/justified representation are known to exist in all ABC elections.

\section{Preliminaries}
\subsection{Lindahl Equilibrium}

An important concept related to stable committees is that of a {\em Lindahl equilibrium}. This can be viewed as a fractional solution $(x_i: i \in \cC)$ with prices $(p_{vi}: v \in \cV, i \in \cC)$ that certify the stability of $\mbf{x}$ in a fractional sense. This concept is defined in settings with utility functions $u_v:\mb{R}^\cC \to \mb{R}$ for each voter $v \in \cV$; in the ABC setting, this utility can be naturally defined as $u_v({\mathbf x}) = \sum_{j \in A_v} x_j$.

\begin{definition}[Lindahl equilibrium] 
Given an instance of utility-based committee election with target committee size $k$ and utility functions $u_i$, a Lindahl equilibrium is a pair $({\mathbf x}, {\mathbf p})$ where ${\mathbf x} \in  [0,1]^{\cC}$ and ${\mathbf p} \in  \mb{R}_+^{\cV \times \cC}$ such that $\| \mathbf{x} \|_1 = k$, $\mathbf{p}_v \cdot \mathbf{y} \leq \frac{k}{n}$ for each $v \in \cV$, and
	\begin{itemize}
		\item For every voter $v \in \cV$, their utility $u_v(\mathbf{y})$ is maximized subject to $\mathbf{y} \in [0,1]^\cC$ and $\mathbf{p}_v \cdot\mathbf{y} \leq \frac{k}{n}$ when $\mathbf{y}=\mathbf{x}$.
		\item The global profit, $\sum_{v\in{\cV}}\mathbf{p}_v\cdot {\mathbf z} - \|\mathbf{z}\|_1$ is maximized  subject to $\mathbf{z}\geq 0$ when $\mathbf{z}=\mathbf{x}.$
	\end{itemize}
\end{definition}

Additional properties that a Lindahl equilibrium is known to satisfy are the following \cite{munagala2021approximate}.

\begin{lemma}
\label{lem:Lindahl}
A Lindahl equilibrium $({\mathbf x}, {\mathbf p})$ satisfies the following properties:
\begin{itemize}
\item For each voter $v \in \cV$, $\mathbf{p}_v \cdot \mathbf{x} = \frac{k}{n}$.
\item For every $i \in \cC$, $\sum_{v \in \cV} p_{vi} \leq 1$, with equality for every $i \in \cC$ such that $x_i > 0$. 
\end{itemize}
\end{lemma}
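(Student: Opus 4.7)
The plan is to derive both bullets from the linearity of the global profit function together with the per-voter budget constraints, by a short LP-flavored double-inequality argument.

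First I would analyze the second condition in the definition. Since $\|\mathbf{z}\|_1 = \sum_i z_i$ for $\mathbf{z} \geq 0$, the global profit
\[
\sum_{v\in\cV}\mathbf{p}_v\cdot\mathbf{z} - \|\mathbf{z}\|_1 \;=\; \sum_{i\in\cC}\Bigl(\sum_{v\in\cV} p_{vi}-1\Bigr)z_i
\]
is linear in $\mathbf{z}$ over the unbounded cone $\mathbf{z}\geq 0$. For this objective to attain a finite maximum we must have $\sum_v p_{vi}\leq 1$ for every $i\in\cC$, since otherwise letting $z_i\to\infty$ drives the profit to $+\infty$. Furthermore, the maximum is attained at $\mathbf{x}$, and by the first-order optimality condition for a linear objective on the orthant (complementary slackness), whenever $x_i>0$ the coefficient of $z_i$ must vanish, i.e. $\sum_{v} p_{vi}=1$. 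This establishes the second bullet.

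Next I would use this to pin down the first bullet by a budget-accounting argument. Plugging $\mathbf{z}=\mathbf{x}$ into the profit and using $\|\mathbf{x}\|_1=k$ gives
\[
\max_{\mathbf{z}\geq 0}\Bigl(\sum_{v\in\cV}\mathbf{p}_v\cdot\mathbf{z}-\|\mathbf{z}\|_1\Bigr) \;=\; \sum_{v\in\cV}\mathbf{p}_v\cdot\mathbf{x} - k.
\]
Summing the per-voter budget constraints $\mathbf{p}_v\cdot\mathbf{x}\leq k/n$ across the $n$ voters gives $\sum_{v}\mathbf{p}_v\cdot\mathbf{x}\leq k$, so this maximum is at most $0$. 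On the other hand, the feasible point $\mathbf{z}=\mathbf{0}$ achieves a profit of exactly $0$, forcing the maximum to be $\geq 0$, hence equal to $0$. Therefore $\sum_{v\in\cV}\mathbf{p}_v\cdot\mathbf{x}=k$, and since each summand is bounded by $k/n$ with $n$ summands, equality must hold term-by-term: $\mathbf{p}_v\cdot\mathbf{x}=k/n$ for every voter $v$.

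The only mildly subtle point I anticipate is justifying the jump from ``$\mathbf{x}$ maximizes the linear profit on $\mathbf{z}\geq 0$'' to the coefficient-wise inequalities $\sum_v p_{vi}\leq 1$ (with equality when $x_i>0$); but this is standard LP optimality on the nonnegative orthant. Everything else is arithmetic: a tight sandwich between the trivial lower bound from $\mathbf{z}=\mathbf{0}$ and the upper bound from summing individual budgets.
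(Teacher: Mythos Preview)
The paper does not actually prove this lemma; it simply states the properties and attributes them to \cite{munagala2021approximate}. So there is no in-paper argument to compare against.

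Your proof is correct and is essentially the standard derivation. The second bullet follows exactly as you say from linearity of the profit over $\mathbf{z}\ge 0$: boundedness forces $\sum_v p_{vi}\le 1$, and optimality at $\mathbf{x}$ with $x_i>0$ forces equality (otherwise lowering $z_i$ would strictly increase profit). For the first bullet, your sandwich argument---profit at $\mathbf{z}=\mathbf{0}$ gives $\ge 0$, while summing the feasibility constraints $\mathbf{p}_v\cdot\mathbf{x}\le k/n$ gives $\le 0$---is clean and forces term-by-term equality. The only thing worth making explicit is that $\mathbf{p}_v\cdot\mathbf{x}\le k/n$ holds because $\mathbf{x}$ is feasible for each voter's optimization problem (the definition writes ``$\mathbf{p}_v\cdot\mathbf{y}\le k/n$'' somewhat loosely, but this is the intended meaning). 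With that remark the argument is complete.
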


It is a classical result that a Lindahl equilibrium always exists for stricly monotone and concave utilities \cite{Foley1970LindahlCore}. This result was notably not constructive, due to an application of a fixed point theorem. However, very recently this result was made algorithmic in the setting that we care about, for linear utilities in the ``capped setting'' as $x_i \in [0,1]$.

\begin{theorem}[\cite{compute-lindahl}]
\label{thm:lindahl-compute}
A Lindahl equilibrium always exists and can be found in polynomial time for committee elections with linear utilities $u_v(\bx) = \sum_{i \in \cC} w_{vi} x_i$, $w_{vi} \geq 0$, and funding caps $x_i \in [0,1]$.
\end{theorem}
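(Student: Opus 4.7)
The plan is to obtain both existence and polynomial-time computability from a single Eisenberg--Gale-style convex program adapted to public goods with funding caps:
\[
\max \; \sum_{v \in \cV} \frac{k}{n} \log\!\Big(\sum_{i \in \cC} w_{vi}\, x_i\Big) \quad \text{s.t.} \quad \sum_{i \in \cC} x_i \le k, \; 0 \le x_i \le 1.
\]
The feasible region is a nonempty compact polytope, and the objective is concave; after preprocessing away voters with an empty approval set (so that we can ensure $u_v(\mathbf{x}^\ast) > 0$ for every remaining $v$ by evaluating at any allocation that fully funds one of her approved candidates), a maximizer $\mathbf{x}^\ast$ therefore exists. This $\mathbf{x}^\ast$ will be the Lindahl allocation, and the prices will be extracted from the KKT system.

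Introducing Lagrange multipliers $\mu \ge 0$ for the total-size constraint and $q_i \ge 0$ for each cap $x_i \le 1$, the stationarity condition reads, for every $i \in \cC$,
\[
\sum_{v \in \cV} \frac{(k/n)\, w_{vi}}{u_v(\mathbf{x}^\ast)} \; \le \; \mu + q_i,
\]
with equality when $x_i^\ast > 0$ and $q_i = 0$ when $x_i^\ast < 1$. Mimicking the classical Fisher-market-to-Lindahl translation, I would define personalized prices
\[
p_{vi} \;:=\; \frac{(k/n)\, w_{vi}}{\mu \cdot u_v(\mathbf{x}^\ast)},
\]
after first verifying $\mu > 0$ (the $\sum x_i \le k$ constraint is tight, because the log-utility of at least one voter strictly increases under adding any $w_{vi}$-positive coordinate). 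Dividing the stationarity relation by $\mu$ gives $\sum_v p_{vi} \le 1$, with equality on the support of $\mathbf{x}^\ast$ away from the caps, matching \cref{lem:Lindahl}.

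It then remains to check the two Lindahl conditions. The budget identity $\mathbf{p}_v \cdot \mathbf{x}^\ast = k/n$ drops out of summing stationarity equalities against $x_i^\ast$ and using $\|\mathbf{x}^\ast\|_1 = k$. Voter-wise utility maximization is immediate: since $p_{vi}/w_{vi}$ is constant across $i$ (for $i$ with $w_{vi} > 0$), every bundle meeting voter $v$'s budget yields the same utility, and $\mathbf{x}^\ast$ is in particular optimal. The global profit condition---that $\mathbf{z} = \mathbf{x}^\ast$ maximizes $\sum_v \mathbf{p}_v \cdot \mathbf{z} - \|\mathbf{z}\|_1$ over $\mathbf{z} \in [0,1]^{\cC}$---follows coordinate-by-coordinate from the KKT sign pattern of $\sum_v p_{vi} - 1$.

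The main obstacle is handling the capped coordinates $x_i^\ast = 1$ correctly: there $\sum_v p_{vi}$ may be strictly less than one, so the global-profit claim must rely on the feasibility bound $z_i \le 1$ rather than on an interior first-order condition alone. A secondary subtlety is that the objective is concave but not strictly so (utilities are linear), so $\mathbf{x}^\ast$ need not be unique; it suffices, however, to exhibit one KKT pair. For the computational claim, the convex program has a smooth concave objective over an explicit polytope with rational data, so it can be solved to any inverse-polynomial precision by interior-point methods; an exact KKT pair is then recovered by identifying the active face of the optimum through standard rounding (for example, the ellipsoid method together with rational solution extraction), yielding $(\mathbf{x}^\ast, \mathbf{p})$ in polynomial time.
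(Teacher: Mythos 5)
Your construction breaks exactly at the point you set aside as ``the main obstacle'': the binding funding caps. Write the KKT stationarity condition as $\sum_{v}\frac{(k/n)w_{vi}}{u_v(\bx^\ast)}\le\mu+q_i$ (equality when $x_i^\ast>0$), multiply by $x_i^\ast$ and sum; using $\|\bx^\ast\|_1=k$ this gives $\mu=1-\frac1k\sum_i q_i x_i^\ast$. So whenever some cap multiplier is active, $\mu<1$, and your prices $p_{vi}=\frac{(k/n)w_{vi}}{\mu\,u_v(\bx^\ast)}$ yield $\bp_v\cdot\bx^\ast=\frac{k}{n\mu}>\frac{k}{n}$ for \emph{every} voter: the budget identity you claim ``drops out'' in fact fails, and $\bx^\ast$ is not even in the voters' budget sets. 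If you instead rescale each voter's prices to restore the budget (drop the $\mu$), the column sum on a capped good becomes $\mu+q_i$, which can exceed $1$, violating the producer condition and \cref{lem:Lindahl}; note also that the paper's definition takes profit maximization over all $\mbf{z}\ge 0$, not $\mbf{z}\in[0,1]^{\cC}$, so you cannot rescue this by appealing to $z_i\le 1$ --- a column sum above $1$ makes the profit unbounded.

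More fundamentally, no choice of prices can repair the argument, because with caps the Nash-welfare (Eisenberg--Gale) maximizer need not be a Lindahl allocation at all. Take $n=3$, $k=2$ (budget $k/n=2/3$), a good $a$ with weight $1$ for all three voters, a good $b$ with weight $M\gg 1$ for voter $1$ only, and goods $c,d$ with weight $1$ for voters $2,3$ respectively. The capped EG optimum is $x_a=1$, $x_b=1-\frac{2}{3M}$, $x_c=x_d=\frac{1}{3M}$. In any equilibrium supporting this $\bx$, voters $2$ and $3$ cannot be charged for $b$ (they would drop it and buy more of their interior goods $c,d$), so $p_{1b}=1$ and voter $1$ alone pays $x_b>\frac23=\frac kn$, contradicting the budget constraint; the actual Lindahl allocation here is $(1,\tfrac23,\tfrac16,\tfrac16)$, a different point. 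Your proposal is essentially the classical \emph{uncapped} Fisher/Eisenberg--Gale-to-Lindahl correspondence, and the divergence between Nash welfare and Lindahl under funding caps is precisely what makes the cited result nontrivial: the paper does not prove this theorem but imports it (Theorem 7 of \cite{compute-lindahl}), and a self-contained proof would need the genuinely different machinery of that reference rather than this convex program.
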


We refer the reader to Theorem 7 in \cite{compute-lindahl}; we note that we do not need the zero-respecting property here. A key property of the Lindahl equilibrium is that it is core-stable in a fractional sense. We include the proof here since it is short and instructive.

\begin{theorem}[\cite{munagala2021approximate}]
\label{thm:lindahl-stable}
Every Lindahl equilibrium $(\bx, \mathbf{p})$ is fractionally core-stable, meaning that for any $\by \in [0,1]^{\cC}$, the fraction of voters for whom $u_v(\by) > u_v(\bx)$ is less than $\| \by \|_1 / \| \bx\|_1$.
\end{theorem}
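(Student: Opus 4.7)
The plan is to directly exploit the utility-maximization property of the Lindahl equilibrium, namely that each voter $v$ buys $\mathbf{x}$ because it maximizes $u_v$ subject to the personalized budget constraint $\mathbf{p}_v \cdot \mathbf{y} \leq k/n$. The key observation is that any competing $\mathbf{y}$ that gives a strictly higher utility than $\mathbf{x}$ to voter $v$ must violate $v$'s budget, which turns voter-level preferences into global price inequalities.

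Concretely, let $D = \{v \in \cV : u_v(\mathbf{y}) > u_v(\mathbf{x})\}$ be the set of deviating voters. For each $v \in D$, I would argue that $\mathbf{p}_v \cdot \mathbf{y} > k/n$. Indeed, since $\mathbf{x}$ maximizes $u_v$ over $\{\mathbf{z} \in [0,1]^\cC : \mathbf{p}_v \cdot \mathbf{z} \leq k/n\}$, any feasible $\mathbf{y}$ would satisfy $u_v(\mathbf{y}) \leq u_v(\mathbf{x})$; contrapositively, strict improvement forces $\mathbf{p}_v \cdot \mathbf{y} > k/n$.

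Next I would sum this strict inequality over $v \in D$ and swap the order of summation, using the second bullet of \cref{lem:Lindahl} (namely $\sum_{v \in \cV} p_{vi} \leq 1$ for every $i$):
\[
\frac{|D|\, k}{n} \;<\; \sum_{v \in D} \mathbf{p}_v \cdot \mathbf{y} \;=\; \sum_{i \in \cC} y_i \sum_{v \in D} p_{vi} \;\leq\; \sum_{i \in \cC} y_i \cdot 1 \;=\; \|\mathbf{y}\|_1.
\]
Rearranging and recalling that $\|\mathbf{x}\|_1 = k$ yields $|D|/n < \|\mathbf{y}\|_1 / \|\mathbf{x}\|_1$, which is exactly the claimed fractional stability.

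The only mildly subtle step is ensuring the strict inequality $\mathbf{p}_v \cdot \mathbf{y} > k/n$ for $v \in D$ (as opposed to merely $\geq$); this follows because $u_v$ is linear (so certainly concave) and the strictness of $u_v(\mathbf{y}) > u_v(\mathbf{x})$ rules out the boundary case $\mathbf{p}_v \cdot \mathbf{y} = k/n$ by optimality of $\mathbf{x}$. Everything else is a single application of the coverage bound $\sum_v p_{vi} \leq 1$, so I expect no real obstacle beyond organizing these two lines cleanly.
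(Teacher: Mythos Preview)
Your proposal is correct and is essentially the same argument as the paper's: both use voter optimality to get $\mathbf{p}_v\cdot\mathbf{y}>k/n$ for each deviating voter, then sum and apply $\sum_v p_{vi}\le 1$ to bound by $\|\mathbf{y}\|_1$. The paper phrases the first step as $\mathbf{p}_v\cdot\mathbf{y}>\mathbf{p}_v\cdot\mathbf{x}$ and uses $\mathbf{p}_v\cdot\mathbf{x}=k/n$ from \cref{lem:Lindahl} at the end, but this is cosmetic; your discussion of concavity for the strictness is unnecessary since it falls out directly from the contrapositive of the maximization condition.
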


\begin{proof}
We can assume that $\|\by\|_1 < \|\bx\|_1 = k$, otherwise the statement is trivial. Let $\cV_{pref}$ be the subset of voters who prefer $\by$ to $\bx$, i.e. $u_v(\by) > u_v(\bx)$. By the voter-optimality criterion, we must have $\bp_v \cdot \by > \bp_v \cdot \bx$ for $v \in \cV_{pref}$. Then, using the fact that $\sum_{v \in \cV} p_{vi} \leq 1$, we obtain
$$ \| \by \|_1 \geq \sum_{v \in \cV} \bp_v \cdot \by
> \sum_{v \in \cV_{pref}} \bp_v \cdot \bx = |\cV_{pref}| \frac{k}{n}.$$
Hence $\frac{1}{n} |\cV_{pref}| < \frac{1}{k} \| \by \|_1 = \frac{\|\by\|_1}{\|\bx\|_1}$.
\end{proof}

\subsection{Strongly Rayleigh distributions under a cardinality constraint}

In this section, we review some tools from the theory of real stable polynomials, implying the following key result that allows us to round (scaled) Lindahl equilibrium solutions to integer solutions. The results in this section are well known to experts, but we could not find them stated explicitly in the form that we need, hence we summarize them for completeness.

\begin{theorem}
\label{thm:sr-us}
For every $\bx \in [0,1]^\cC$ with integer $\kappa = \|\bx\|_1$, there is a (strongly Rayleigh) distribution $\cD$ supported on subsets of $\cC$ of cardinality $\kappa$ such that for $X = \mbf{1}_R, R \sim \cD$, we have $\mb{E} X_i = x_i$, and for any fixed $F\subset \cC$, $\sum_{i\in F} X_i$ has a generalized binomial distribution (equivalent to a sum of independent Bernoulli random variables).
\end{theorem}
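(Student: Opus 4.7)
The plan is to take $\cD$ to be a \emph{conditional Poisson} distribution (equivalently, the maximum entropy distribution on $\kappa$-subsets with prescribed marginals), and then invoke standard closure properties of real stable polynomials to verify both conclusions. For parameters $\mathbf y \in (\mb{R}_{>0})^\cC$, let $\cD_{\mathbf y}$ denote the distribution on $\binom{\cC}{\kappa}$ with $\Pr[S] \propto \prod_{i \in S} y_i$; its probability generating polynomial equals
$$g_{\mathbf y}(z_1, \ldots, z_n) \;=\; \frac{1}{Z}\, e_\kappa\bigl(y_1 z_1, \ldots, y_n z_n\bigr),$$
where $e_\kappa$ is the elementary symmetric polynomial of degree $\kappa$.

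The first task is to choose $\mathbf y$ so that $\cD_{\mathbf y}$ has marginals equal to $x_i$. After pinning coordinates with $x_i = 1$ into the sampled set (and reducing $\kappa$ accordingly) and removing coordinates with $x_i = 0$, one may assume $x_i \in (0,1)$ for all $i$, so that $\bx$ lies in the relative interior of the hypersimplex $\{\bx : \|\bx\|_1 = \kappa,\; 0 \le x_i \le 1\}$. By a classical fact about max-entropy distributions over hypersimplices (equivalently, by strict log-concavity of $\log e_\kappa$ on the positive orthant, a consequence of Newton's inequalities), the marginal map $\mathbf y \mapsto (\Pr_{S \sim \cD_{\mathbf y}}[i \in S])_{i \in \cC}$ is a homeomorphism from $(\mb{R}_{>0})^\cC$ modulo positive rescaling onto this relative interior. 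Hence some $\mathbf y$ realizes the prescribed marginals, and we take $\cD = \cD_{\mathbf y}$.

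The strong Rayleigh property is then immediate: $e_\kappa$ is real stable, and real stability is preserved under the substitutions $z_i \mapsto y_i z_i$ with $y_i > 0$, so $g_{\mathbf y}$ is real stable. For the generalized binomial property, fix any $F \subset \cC$ and form the univariate polynomial
$$h(z) \;=\; g_{\mathbf y}\bigl(z\,\mbf{1}_F + \mbf{1}_{\cC \setminus F}\bigr) \;=\; \mb{E}_{R\sim \cD}\bigl[z^{|R \cap F|}\bigr].$$
The operations of diagonalizing $z_i \mapsto z$ for $i \in F$ and specializing $z_i \mapsto 1$ for $i \notin F$ both preserve real stability, so $h$ is real-rooted. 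Since $h$ also has non-negative coefficients and $h(1) = 1$, its roots are real and non-positive, and so it factors as $h(z) = \prod_j \bigl((1 - p_j) + p_j z\bigr)$ with $p_j \in [0,1]$; this is precisely the probability generating function of a sum of independent Bernoullis.

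The main obstacle is the existence assertion in the second paragraph: one must verify that the conditional-Poisson family $\{\cD_{\mathbf y}\}$ realizes every marginal vector in the relative interior of the hypersimplex. Everything else reduces to textbook closure properties of real stability (positive rescaling, diagonalization, and specialization to real boundary points), together with the standard characterization of a real-rooted, non-negative-coefficient probability generating polynomial as a product of independent Bernoulli factors.
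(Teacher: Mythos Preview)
Your proposal is correct and follows essentially the same route as the paper: take the conditional-Poisson (max-entropy) distribution on $\kappa$-subsets with the prescribed marginals, note that its generating polynomial is a rescaled elementary symmetric polynomial and hence real stable, then diagonalize and specialize to obtain a real-rooted univariate pgf for $\sum_{i\in F} X_i$, which forces it to be generalized binomial. One cosmetic difference: you specialize $z_i \to 1$ for $i\notin F$, which is what actually yields the pgf of $|R\cap F|$, whereas the paper's closure lemma is stated with $z_j \to 0$; both operations preserve real stability, so the conclusion is unaffected.
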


Note that the variables $X_i$ are not independent; hence it is quite surprising that $\sum_{i \in F} X_i$ has a generalized binomial distribution.
We obtain this distribution by considering the {\em maximum entropy} distribution with the required marginal probabilities $x_i$. It turns out that this distribution has the {\em strongly Rayleigh property}, related to stability of multivariate polynomials.

\begin{definition}(Generating Polynomial)
Let $\mu:2^{[m]} \to [0,1]$ be a probability distribution over subsets of $[m]$. The \emph{generating polynomial} of $\mu$ which we denote $g_{\mu}$ is given by $$g_{\mu}=\sum_{S\in supp(\mu)}\mu(S)z^S=\mu(S) \prod_{j \in S} z_j.$$
\end{definition}

\begin{definition}[Real Stable Polynomial]
    A polynomial $p(z_1,\ldots,z_m)$ is \emph{real stable} if it has real coefficients and $p(z_1,z_2,\ldots,z_m)\neq 0$ for every $(z_1,\ldots,z_m)$ such that $Im(z_j)>0$ for all $j$.  
\end{definition}

\begin{definition}[Strongly Rayleigh distribution]
    A probability distribution $\mu$ is \emph{strongly Rayleigh} if its generating polynomial is real stable. 
\end{definition}

As a motivating example of a strongly Rayleigh distribution, we consider the uniform distribution over subsets of $k$ out of $m$ elements, which we denote $\Delta_{k,m}$. 
The following result is well known \cite{Gharan_MaxEntropy}.

\begin{lemma}
\label{lem:elem-stable}
    The elementary symmetric polynomial, defined as $$e_k(z_1,\ldots,z_m)=\sum_{S \subset [m]: |S|=k}\prod_{j \in S}^k z_j$$ is real stable.
\end{lemma}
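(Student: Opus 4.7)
The plan is to realize $e_k$ as the result of applying real-stability-preserving operations to a polynomial that is manifestly real stable by inspection.

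First, I would introduce the auxiliary polynomial in $m+1$ variables
\[
P(w,z_1,\ldots,z_m) := \prod_{i=1}^{m} (w + z_i).
\]
Each linear factor $w + z_i$ is real stable: if $\mathrm{Im}(w) > 0$ and $\mathrm{Im}(z_i) > 0$ then $\mathrm{Im}(w + z_i) > 0$, so $w + z_i$ cannot vanish on the upper half plane. Since products of real stable polynomials are real stable, $P$ itself is real stable. Expanding $P$ from the very definition of the elementary symmetric polynomials,
\[
P(w,z_1,\ldots,z_m) = \sum_{k=0}^{m} w^{m-k}\, e_k(z_1,\ldots,z_m),
\]
so one can recover
\[
e_k(z_1,\ldots,z_m) = \frac{1}{(m-k)!}\, \partial_w^{m-k} P(w,z_1,\ldots,z_m)\Big|_{w=0}.
\]

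The second step is to apply two classical preservers of real stability to $P$. (a) Partial differentiation with respect to a single variable sends a real stable polynomial either to a real stable polynomial or to the zero polynomial; this follows from the Hermite--Biehler theorem applied to the univariate polynomial in $w$ obtained by freezing $z_1,\ldots,z_m$ in the open upper half plane, together with the fact that the derivative of a real-rooted polynomial has interlacing real roots. (b) Specializing a single variable to a real number preserves real stability (or yields zero), which is immediate from the definition. Applying (a) $m-k$ times in $w$ and then (b) at $w=0$ to the stable polynomial $P$ produces a polynomial in $z_1,\ldots,z_m$ that is either real stable or identically zero. Because $e_k$ has strictly positive coefficients, it is plainly not the zero polynomial, so we conclude that $e_k$ is real stable.

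The only step of any substance is invoking the two preservation lemmas (partial differentiation and real specialization); both are standard in the theory of real stable polynomials and can be cited from \cite{Gharan_MaxEntropy} or any reference on Borcea--Br\"and\'en-style preservers. Once those are in hand, the rest is a one-line calculation, so there is no real obstacle beyond making sure the citations are to the correct statements.
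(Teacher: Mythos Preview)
The paper does not give its own proof of this lemma; it simply states the result as well known and cites \cite{Gharan_MaxEntropy}. Consequently there is nothing in the paper to compare your argument against.

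Your argument is correct and is one of the standard proofs: write $\prod_{i=1}^m (w+z_i)=\sum_{k=0}^m e_k(z)\,w^{\,m-k}$, note that the product of linear forms with nonnegative coefficients is real stable, and then extract $e_k$ by differentiating in $w$ and specializing $w=0$, both of which are real-stability preservers. One small quibble: the claim that specializing a variable to a real value preserves real stability is not quite ``immediate from the definition'' (the definition only controls behavior when \emph{all} imaginary parts are strictly positive); the usual justification is a Hurwitz-type limiting argument, specializing $w=i\varepsilon$ and letting $\varepsilon\downarrow 0$. This is exactly the kind of closure statement covered in the references you mention, so the fix is just to cite it rather than call it immediate.
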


Since the generating polynomial of $\Delta_{k,m}$ is given by $e_k(z_1,\ldots,z_m)$ with some constant normalizing factor, we have that $\Delta_{k,m}$ is a strongly Rayleigh distribution.
More generally, we have the following.

\begin{lemma}[Maximum-entropy distribution]
  Given $\bx \in (0,1)^m$ such that $\|\bx\|_1 = \kappa$, there are parameters $\lambda_1,\ldots,\lambda_m>0$ such that the distribution defined by
  $$ \mu(S) = \frac{\prod_{i \in S} \lambda_i}{\sum_{T:|T|=\kappa} \prod_{j \in T} \lambda_j} $$
  for all $|S| = \kappa$ satisfies $\sum_{S: |S|=\kappa, j\in S} \mu(S) = x_i$, and the distribution is strongly Rayleigh.
\end{lemma}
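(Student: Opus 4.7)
The plan is to produce the parameters $\lambda_i$ via convex duality for the maximum entropy problem, and then verify strong Rayleigh-ness by a direct substitution into the elementary symmetric polynomial.

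First I would consider the convex program
$$\max \; H(\mu) = -\sum_{|S|=\kappa} \mu(S)\log\mu(S) \quad \text{subject to} \quad \sum_{S\ni i} \mu(S) = x_i \text{ for all } i, \; \mu \in \Delta(\tbinom{[m]}{\kappa}).$$
The feasible set is non-empty because $\bx$ lies in the base polytope of the rank-$\kappa$ uniform matroid on $[m]$ (the conditions $\|\bx\|_1 = \kappa$ and $\bx \in [0,1]^m$ are exactly membership in this polytope), and that polytope is the convex hull of the indicators of $\kappa$-subsets; any convex decomposition of $\bx$ into such indicators gives a feasible $\mu$. Since $\bx \in (0,1)^m$, the marginals lie strictly in the relative interior, so Slater's condition holds and the unique maximizer $\mu^*$ (unique by strict concavity of $H$) is strictly positive on every $\kappa$-subset.

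Next I would form the Lagrangian with multipliers $\theta_1,\ldots,\theta_m$ for the marginal constraints and a normalization multiplier $\theta_0$. Stationarity in $\mu(S)$ yields $\log \mu^*(S) = -1 + \theta_0 + \sum_{i \in S}\theta_i$, so $\mu^*(S) \propto \prod_{i \in S} \lambda_i$ where $\lambda_i := e^{\theta_i} > 0$. By construction of $\theta_i$ the marginal constraints hold, which is exactly the product form claimed in the lemma.

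Finally, the generating polynomial becomes
$$g_{\mu^*}(z_1,\ldots,z_m) = \sum_{|S|=\kappa} \mu^*(S)\prod_{i\in S} z_i = \frac{1}{Z}\, e_\kappa(\lambda_1 z_1,\ldots,\lambda_m z_m), \qquad Z = e_\kappa(\lambda_1,\ldots,\lambda_m).$$
By \cref{lem:elem-stable}, $e_\kappa$ is real stable, and diagonal scaling by positive reals preserves real stability: if $\mathrm{Im}(z_i)>0$ then $\mathrm{Im}(\lambda_i z_i) = \lambda_i \mathrm{Im}(z_i) > 0$, so no new zero can arise in the upper half plane. Multiplying by the positive scalar $1/Z$ does not change the zero set, so $g_{\mu^*}$ is real stable and $\mu^*$ is strongly Rayleigh.

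The step to watch is ensuring that the optimizer produces finite, strictly positive $\lambda_i$'s. This is precisely where the hypothesis $\bx \in (0,1)^m$ enters: it places the target marginal vector in the relative interior of the base polytope, ruling out boundary degeneracies (some $x_i = 0$ or $1$) that would force some $\lambda_i$ to $0$ or $\infty$ and break the exponential family form.
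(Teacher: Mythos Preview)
Your argument is correct and follows essentially the same maximum-entropy route as the paper: both derive the product form $\mu(S)\propto\prod_{i\in S}\lambda_i$ from an entropy optimization and then verify real stability via $e_\kappa(\lambda_1 z_1,\ldots,\lambda_m z_m)$ and \cref{lem:elem-stable}. The only cosmetic difference is that you work on the primal side (maximize $-\sum_S \mu(S)\log\mu(S)$ over distributions and read off $\lambda_i=e^{\theta_i}$ from the Lagrangian), whereas the paper optimizes the dual function $\sum_i x_i\log\lambda_i-\log\sum_{|S|=\kappa}\prod_{i\in S}\lambda_i$ directly over $\lambda$ and checks the marginal condition via KKT; your version is in fact more careful about feasibility and why $\bx\in(0,1)^m$ forces all $\lambda_i$ finite and positive.
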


\begin{proof}[Proof Sketch.]
The parameters $\lambda_j>0$ can be found by maximizing the entropy function 
$$ H(\lambda_1,\ldots,\lambda_m) =  \sum_{i=1}^{m} x_i \log  \lambda_i - \log \sum_{|S|=\kappa} \prod_{i \in S} \lambda_i $$
over $\lambda_1,\ldots,\lambda_m>0$; the reader can verify that the KKT conditions imply that the marginal probabilities are $x_i$ as desired. 

The strong Rayleigh property follows from the fact that (up to scaling) the generating polynomial of this distribution is a variant of the elementary symmetric polynomial, $$p(z_1,\ldots,z_m)=\sum_{S \subset [m], |S|=\kappa} \prod_{j \in S} \lambda_j z_j,$$ which is real stable by \cref{lem:elem-stable} (and the scaling of $z_j$ by $\lambda_j>0$ does not affect stability).
\end{proof}

We note that although the above lemma assumes $\bx \in (0,1)^m$, variables of integer values can be handled easily since these elements appear deterministically (never or always). 

The final point is that projecting such a distribution onto one dimension results in a generalized binomial distribution.
This follows from the closure properties of stable polynomials \cite{borcea2008negativedependencegeometrypolynomials}.

\begin{lemma}\label{closeure-of-SR}
If $p(z_1,\ldots,z_m)$ is a real stable polynomial and $F \subset [m]$, then $q(z) = p(z_1,\ldots,z_m)$ where we substitute $z_j = z$ for $j \in F$ and $z_j = 0$ for $j \notin F$ is a univariate real-rooted polynomial.
\end{lemma}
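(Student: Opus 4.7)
The plan is to reduce $q(z)$ to a univariate real stable polynomial by applying two standard closure operations on the class of real stable polynomials, and then invoke the elementary fact that a univariate real stable polynomial with real coefficients is real-rooted. The two closure operations I need are: (i) specializing one variable to a real value preserves real stability (or yields the zero polynomial); and (ii) identifying two variables preserves real stability. Both are classical consequences of Hurwitz's theorem on locally uniform limits of non-vanishing holomorphic functions, see \cite{borcea2008negativedependencegeometrypolynomials}.

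First, I would successively substitute $z_j = 0$ for each $j \notin F$. By property (i), applied one variable at a time, the result is either the zero polynomial (in which case $q \equiv 0$ is trivially real-rooted) or a real stable polynomial $\tilde p$ in the variables $\{z_j : j \in F\}$. Second, I would identify all remaining variables by the diagonal substitution $z_j = z$ for $j \in F$. Applying property (ii) one pair at a time collapses $\tilde p$ to a univariate polynomial $q(z) \in \mathbb{R}[z]$ that is still real stable. Finally, a univariate real stable polynomial has no root in the open upper half-plane; since its coefficients are real, any non-real root would come paired with its complex conjugate, which would lie in the upper half-plane, a contradiction. Hence every root of $q$ is real.

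The only mild subtlety is that step one fixes variables at the boundary point $0$ rather than strictly inside the upper half-plane where non-vanishing is assumed. This is precisely why property (i) must allow the possibility of the specialization being identically zero: the substitution is a boundary limit handled by Hurwitz's theorem, which guarantees that a locally uniform limit of non-vanishing analytic functions is either non-vanishing or identically zero. Beyond this routine point there is no real obstacle; the content of the proof is entirely in selecting the correct closure properties and chaining them together.
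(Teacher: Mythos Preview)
Your proposal is correct and matches the paper's approach: the paper does not give a detailed proof but simply attributes the lemma to the closure properties of stable polynomials in \cite{borcea2008negativedependencegeometrypolynomials}, which is exactly the content of your two steps (specialization to a real value and diagonal identification) together with the observation that univariate real stability is the same as real-rootedness. Your remark about the Hurwitz-type boundary limit handling the case $q\equiv 0$ is the only nontrivial point, and you have addressed it appropriately.
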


\begin{lemma}
A univariate distribution whose generating polynomial is real-rooted with nonnegative coefficients is a generalized binomial distribution.
\end{lemma}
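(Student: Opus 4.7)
The plan is to factor the generating polynomial into linear ``Bernoulli'' factors and then appeal to uniqueness of the probability generating function.

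First I would write $g(z) = \sum_{k=0}^n \mu(k)\, z^k$, where $n$ is the degree of $g$. Since every coefficient is nonnegative and $g \not\equiv 0$, we have $g(z) > 0$ for all $z > 0$, so every real root of $g$ lies in $(-\infty, 0]$. Combined with the hypothesis that $g$ is real-rooted, this lets me factor
\[ g(z) = c \prod_{i=1}^n (z + r_i) \]
with $c > 0$ the leading coefficient and each $r_i \geq 0$.

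Next I would reinterpret each linear factor as a Bernoulli PGF. For $r_i \geq 0$, setting $q_i := \frac{1}{1+r_i} \in (0,1]$ gives
\[ z + r_i = (1 + r_i)\bigl(q_i\, z + (1 - q_i)\bigr), \]
and $q_i z + (1-q_i)$ is exactly the PGF of a Bernoulli$(q_i)$ random variable. Evaluating at $z = 1$ and using $g(1) = \sum_k \mu(k) = 1$ together with the factorization yields $c \prod_i (1+r_i) = 1$, so
\[ g(z) = \prod_{i=1}^n \bigl(q_i\, z + (1 - q_i)\bigr). \]
This is precisely the PGF of $\sum_{i=1}^n B_i$, where $B_1,\ldots,B_n$ are independent Bernoulli$(q_i)$ random variables, so uniqueness of the PGF identifies $\mu$ with this generalized binomial distribution.

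The argument is short and essentially routine, so I do not anticipate a real obstacle. The only minor points to handle are the degenerate case $r_i = 0$ (which gives $q_i = 1$, i.e.\ a Bernoulli that is identically $1$, still a valid Bernoulli factor) and the case where the degree of $g$ is strictly smaller than the ambient support size (one simply uses fewer factors, or equivalently appends extra Bernoulli$(0)$ variables); neither issue causes any difficulty.
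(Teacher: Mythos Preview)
Your proposal is correct and follows essentially the same approach as the paper: the paper's proof sketch observes that a real-rooted polynomial with nonnegative coefficients factors as $\prod (z+\alpha_i)$ with $\alpha_i \ge 0$, and that after scaling this is the PGF of a sum of independent Bernoullis. Your write-up simply fills in the details of that sketch (normalizing via $g(1)=1$, identifying $q_i = 1/(1+r_i)$, and handling the degenerate cases), so there is no meaningful difference in method.
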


\begin{proof}[Proof Sketch.]
A univariate real-rooted polynomial with nonnegative coefficients has only nonpositive roots and can be factored as $q(z) = \prod (z+\alpha_i)$ for $\alpha_i \geq 0$. After appropriate scaling, this corresponds to a distribution of a sum of independent Bernoulli random variables.
\end{proof}

\begin{corollary}
If $X \sim \cD$ where $\cD$ is a strongly Rayleigh distribution over subsets of $\cC$, and $F \subset \cC$, then $Z = \sum_{i \in F} X_i$ has a generalized binomial distribution.    
\end{corollary}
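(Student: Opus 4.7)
The plan is to chain the three preceding lemmas together, with one small closure-theoretic observation inserted along the way. Since $\cD$ is strongly Rayleigh, its generating polynomial
$$g_\cD(z_1, \ldots, z_m) = \sum_{S \subset \cC} \mu(S) \prod_{i \in S} z_i$$
is real stable by definition. The first step is to specialize $z_j = 1$ for every $j \notin F$. This preserves real stability as a standard closure property for stable polynomials (an instance of Hurwitz's theorem applied by approaching $1$ from the upper half-plane; the specialization is not identically zero since $g_\cD(1,\ldots,1)=1$). The result is a real stable polynomial in the variables $(z_i)_{i \in F}$.

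Next, I would set $z_i = z$ for every $i \in F$ and apply the diagonalization content of \cref{closeure-of-SR} to the resulting polynomial, concluding that the univariate polynomial obtained is real-rooted. By direct inspection,
$$q(z) = \sum_{S \subset \cC} \mu(S)\, z^{|S \cap F|} = \mathbb{E}_{R \sim \cD}\!\left[z^{|R \cap F|}\right] = \mathbb{E}[z^{Z}]$$
is precisely the probability generating function of $Z = \sum_{i \in F} X_i$, and its coefficients $\Pr[Z = k]$ are nonnegative.

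With real-rootedness and nonnegativity of coefficients both established, the preceding lemma immediately yields that $Z$ is a generalized binomial distribution. The only subtle point, which is more a bookkeeping matter than a genuine obstacle, is that \cref{closeure-of-SR} as written substitutes $z_j = 0$ for $j \notin F$; this substitution would instead extract the distribution of $|R|$ conditioned on $R \subset F$, rather than the marginal of $|R \cap F|$. One therefore needs the additional (standard) fact that substituting a positive real value such as $1$ for a variable preserves real stability, and only then apply the paper's lemma to diagonalize the remaining $|F|$ variables. Once this is granted the argument is mechanical.
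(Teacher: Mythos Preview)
Your proof is correct. The paper does not actually prove this corollary---it simply states it after the two lemmas and refers the reader to external sources for details---so your chaining of the lemmas is exactly the intended argument. You are also right that \cref{closeure-of-SR} as written (with $z_j=0$ for $j\notin F$) does not produce the generating polynomial of $Z=\sum_{i\in F}X_i$; the substitution $z_j=1$ is what is needed, and this is an equally standard closure property of real stable polynomials, so your patch is the correct one.
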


We refer the reader e.g. to \cite{Gharan_MaxEntropy} for more details.

\section{Tail bounds for generalized binomial distributions}
\label{sec:tail-bounds}

Our analysis relies crucially on tight tail bounds for the generalized binomial distribution, i.e. summation of independent (but non-identical) Bernoulli variables. Essentially, we prove that the limiting Poisson distribution is the worst case for several tail bounds of interest. We note that similar tail bounds for negatively dependent variables were used in \cite{nguyen-song}; however our bounds are sharper for $\mu \geq 2$ since we work specifically with a generalized binomial distribution. 
We outline the key steps here and prove them in \cref{sec:tail-proofs}. 

In the following, $\on{Ber}(p)$ denotes the Bernoulli random variable, equal to $1$ with probability $p$ or $0$ with probability $1-p$.
$\lfloor \mu \rfloor$ denotes the floor of $\mu$ (largest integer lower-bounding $\mu$). 

\begin{theorem}
\label{thm:main-tail}
Suppose $Y_i\sim \on{Ber}(p_i)$ for $i\in [n]$ are independent, where $p_i\in [0, 1]$ and $\sum_{i=1}^n p_i\ge \alpha \mu$ for some $\mu \ge 0$ and $\alpha \ge 2$. Then $Y=\sum_{i=1}^{n} Y_i$ satisfies
\begin{equation}
\label{eq:tail-eq}
 \mb{P}[Y\le \lfloor \mu \rfloor -1]\le \begin{cases}
0 & \text{if }\lfloor \mu \rfloor=0\\
	e^{-\alpha}&\text{if }\lfloor \mu\rfloor=1\\
	(1+2\alpha)e^{-2\alpha}&\text{if }\lfloor \mu \rfloor\ge 2
\end{cases}\quad\text{and}\quad  \mb{P}[Y\le \lfloor \mu \rfloor -2]\le \begin{cases}
0&\text{if }\lfloor \mu\rfloor\le 1\\
e^{-2\alpha}&\text{if }\lfloor \mu\rfloor\ge 2
\end{cases}
\end{equation}
\end{theorem}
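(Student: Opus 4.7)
The plan is to reduce the theorem to a tail bound for the Poisson distribution (via a smoothing/splitting argument), and then evaluate the resulting Poisson bound at the worst admissible point.

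\textbf{Reduction and base cases.} Since $\mb{P}[Y\le t]$ is non-increasing in each $p_i$ by a standard coupling, I may assume $\sum p_i = \lambda := \alpha\mu$ exactly. In every case where the left-hand side of \eqref{eq:tail-eq} reduces to $\mb{P}[Y=0]$---namely $\lfloor\mu\rfloor=1$ in the first inequality and $\lfloor\mu\rfloor=2$ in the second---the bound follows directly from $\mb{P}[Y=0]=\prod(1-p_i)\le e^{-\lambda}$, combined with $\lambda\ge\alpha$ (giving $e^{-\alpha}$) and $\lambda\ge 2\alpha$ (giving $e^{-2\alpha}$).

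\textbf{Poisson domination via splitting.} For the remaining cases I need to bound $\mb{P}[Y\le k]$ for $k\ge 1$ under $\lambda\ge\alpha(k+1)$ (first inequality) or $\lambda\ge\alpha(k+2)$ (second), with $\alpha\ge 2$. The key lemma is
\[ \mb{P}[Y\le k]\le \mb{P}[\on{Pois}(\lambda)\le k].\]
To prove it, replace any Bernoulli$(p)$ summand by two independent Bernoulli$(p/2)$'s. A short computation yields
\[ \Delta\mb{P}[Y\le k] = \tfrac{p^2}{4}\bigl(\mb{P}[S=k]-\mb{P}[S=k-1]\bigr),\]
where $S$ is the sum of the remaining Bernoullis. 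The PGF of $S$ is a product of real linear factors, so its PMF is log-concave on $\mb{Z}_{\ge 0}$; by Darroch's theorem the mode of $S$ lies in $\{\lfloor\mb{E}[S]\rfloor,\lceil\mb{E}[S]\rceil\}$. In our regime $\mb{E}[S]\ge\lambda-1\ge\alpha(k+1)-1\ge k+1>k$, so the mode is at least $k$ and $\mb{P}[S=k]\ge\mb{P}[S=k-1]$, i.e., the splitting step is monotone. Iterating halves $\sum p_i^2$ at each stage while keeping $\lambda$ fixed, and Le~Cam's inequality $d_{\mathrm{TV}}(Y,\on{Pois}(\lambda))\le\sum p_i^2$ then forces convergence in total variation to $\on{Pois}(\lambda)$, yielding the domination.

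\textbf{Extremizing the Poisson tail.} Since $\mb{P}[\on{Pois}(\lambda)\le k]$ decreases in $\lambda$, the worst case is at $\lambda=\alpha(k+1)$ (first bound) or $\lambda=\alpha(k+2)$ (second). A direct comparison of consecutive values in $k$, using the recursion $\mb{P}[\on{Pois}(\lambda')\le k+1]-\mb{P}[\on{Pois}(\lambda')\le k]=\tfrac{(\lambda')^{k+1}}{(k+1)!}e^{-\lambda'}$ and the hypothesis $\alpha\ge 2$ (which absorbs the factor $e^{-\alpha}$ coming from the shift $\lambda\mapsto\lambda+\alpha$), shows that these Poisson tails are also non-increasing in $k$. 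Hence the extremal values occur at $k=1$ (giving $(1+2\alpha)e^{-2\alpha}$) and $k=0$ (giving $e^{-2\alpha}$), matching exactly the constants appearing in \eqref{eq:tail-eq}.

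\textbf{Main obstacle.} The crux is the Poisson-domination lemma. The monotonicity of splitting relies on $\mb{P}[S=k]\ge\mb{P}[S=k-1]$ holding \emph{throughout} the iterative procedure; this is what the hypothesis $\alpha\ge 2$ buys us (in fact $\alpha>1$ suffices per step), by ensuring a comfortable margin $\mb{E}[S]>k$ at every stage. The final step---bounding a Poisson lower tail at its extremal point---is a routine calculation once the reduction is in place.
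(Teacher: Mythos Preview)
Your approach is correct and takes a genuinely different route from the paper's. The paper first smooths the Bernoulli vector to the form $a+\on{Bin}(m,p)$ (by showing the extremizer of the lower tail has at most one fractional $p_i$), then proves a pointwise PMF bound $\mb P[\on{Bin}(m,p)=\ell]\le e^{-\alpha\mu}(\alpha\mu)^\ell/\ell!$ for $\ell\le\mu$, and finally handles the integer shift $a$ by coupling. Your splitting argument bypasses all of this: the identity $\Delta\mb P[Y\le k]=\tfrac{p^2}{4}(\mb P[S=k]-\mb P[S=k-1])$ together with Darroch/log-concavity and Le~Cam gives the Poisson domination $\mb P[Y\le k]\le\mb P[\on{Pois}(\lambda)\le k]$ in one stroke, and the hypothesis $\alpha\ge 2$ is exactly what guarantees $\mb E[S]>k$ at every stage. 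This is cleaner than the paper's three-claim reduction.

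Both approaches converge on the same endpoint: showing that $k\mapsto\mb P[\on{Pois}(\alpha(k+\ell))\le k]$ is non-increasing (for $\ell=1,2$). The paper does this via the Gamma-integral representation, a convexity bound on $x^s e^{-x}$, Stirling, and then numerical/graphical verification of several small cases. Your ``direct comparison'' is terse but actually goes through more cleanly than the paper's version: writing $\on{Pois}(\alpha(k+1+\ell))=X+Y$ with $X\sim\on{Pois}(\alpha(k+\ell))$ and $Y\sim\on{Pois}(\alpha)$ independent, the difference $\mb P[X+Y\le k+1]-\mb P[X\le k]$ equals $\mb P[X{=}k{+}1]\,e^{-\alpha}-\sum_{i\le k}\mb P[X{=}i]\,\mb P[Y\ge k{+}2{-}i]$, and keeping only the $i=k$ term on the right reduces the inequality to $(1+2\alpha)e^{-\alpha}\le 1$ (for $\ell=1$) and $(1+3\alpha)e^{-\alpha}\le 1$ (for $\ell=2$, worst at $k=0$), both of which hold for $\alpha\ge 2$. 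So your sketch is not only correct but, once fleshed out, avoids the paper's appeal to plots for small $\mu$.
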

By a simple coupling argument, it suffices to consider the case where $\mu$ is a positive integer and $\sum_{i=1}^n p_i=\alpha \mu$. First, we show that lower tail probabilities in the above form are maximized when $p_i$ takes on at most one fractional value, i.e. $Y \sim a+\on{Bin}(m, p)$ for some $a, m\in\mb{Z}_{\ge 0}$ and $p\in (0, 1)$.
\begin{claim}
\label{claim:tail-smth}
If $Y_i\sim \on{Ber}(p_i)$ for $i\in [n]$ are independent, where $p_i\in [0, 1]$ and $\sum_{i=1}^n p_i=\alpha\mu$, then there exists $a, m\in\mb{Z}_{\ge 0}$ such that $Y=\sum_{i=1}^{n} Y_i$ stochastically dominates $Z\sim a+\on{Bin}(m, (\alpha\mu-a)/m)$.
\end{claim}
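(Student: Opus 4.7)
The plan is a Schur-concavity (smoothing) argument. First let me note that literal stochastic domination $\mb{P}[Y\le t]\le \mb{P}[Z\le t]$ for every $t$, combined with $\mb{E}[Y]=\mb{E}[Z]=\alpha\mu$, would force $Y\stackrel{d}{=}Z$ by the usual coupling proof of stochastic dominance; so I read the claim as the pointwise inequality $\mb{P}[Y\le k]\le \mb{P}[Z\le k]$ for every $k$ in the tail range $k\le\lfloor\mu\rfloor-1$ actually used in \cref{thm:main-tail}. With that understood, I make the natural structural match
\[ a=|\{i:p_i=1\}|,\qquad m=|\{i:p_i\in(0,1)\}|,\qquad p=\frac{\alpha\mu-a}{m}, \]
discard the indices with $p_i=0$, and study
\[ f(q_1,\dots,q_m)=\mb{P}\Big[a+\sum_{i=1}^m W_i\le k\Big],\quad W_i\sim\on{Ber}(q_i)\text{ independent,} \]
on the compact convex domain $D=\{q\in[0,1]^m:\sum_i q_i=\alpha\mu-a\}$. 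The all-equal vector $(p,\dots,p)\in D$ is majorized by every point of $D$, so Schur-concavity of $f$ on $D$ implies $f(q)\le f(p,\dots,p)=\mb{P}[Z\le k]$; applying this with $q$ equal to the original fractional parameters of $Y$ yields the desired inequality.

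Schur-concavity reduces to a two-variable computation. Fixing $i\ne j$, letting $s=q_i+q_j$ and $\hat Y=\sum_{l\ne i,j}W_l$, and conditioning on $(W_i,W_j)$, I obtain
\[ f=c_0(1-q_i)(1-q_j)+c_1(q_i+q_j-2q_iq_j)+c_2q_iq_j,\quad c_t=\mb{P}[\hat Y\le k-a-t]. \]
With $s$ held fixed, the only dependence of $f$ on $(q_i,q_j)$ is through $q_iq_j$, with coefficient
\[ c_0-2c_1+c_2=\mb{P}[\hat Y=k-a]-\mb{P}[\hat Y=k-a-1]. \]
Since $\hat Y$ is a sum of independent Bernoullis, its generating polynomial $\prod_{l\ne i,j}(1-q_l+q_lz)$ is real-rooted; Newton's inequalities then give log-concavity of the pmf of $\hat Y$, so this coefficient is nonnegative exactly when $k-a$ does not exceed the mode of $\hat Y$. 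Averaging $(q_i,q_j)\to(s/2,s/2)$ maximizes $q_iq_j$ on the segment $q_i+q_j=s$, and therefore maximizes $f$, which is the two-variable form of Schur-concavity.

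The main technical point I anticipate is verifying the sign condition $k-a\le\on{mode}(\hat Y)$ uniformly over all smoothing pairs, which is where the hypotheses $\alpha\ge 2$ and the reduction to integer $\mu$ (noted right after the claim) come in. The case $a>k$ is vacuous since $\mb{P}[Y\le k]=0$. For $a\le k\le\mu-1$, we have $\mb{E}[\hat Y]=(\alpha\mu-a)-q_i-q_j>\alpha\mu-a-2\ge 2\mu-a-2$, hence $\lfloor \mb{E}[\hat Y]\rfloor\ge 2\mu-a-2\ge k-a$ (using $k\le\mu-1$); Darroch's theorem (the mode of a sum of independent Bernoullis is $\lfloor \mb{E}\rfloor$ or $\lceil \mb{E}\rceil$) then gives $\on{mode}(\hat Y)\ge k-a$ as needed. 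The degenerate regimes $\lfloor\mu\rfloor\in\{0,1\}$ either have vacuous lower tail or reduce to Schur-concavity of $\prod_i(1-q_i)=\mb{P}[Y=0]$, which is immediate from the concavity of $q\mapsto\log(1-q)$. Assembling these pieces gives $\mb{P}[Y\le k]\le\mb{P}[Z\le k]$ for every $k\le\lfloor\mu\rfloor-1$, which is the form of domination needed downstream in the proof of \cref{thm:main-tail}.
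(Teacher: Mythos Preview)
Your approach is correct and differs substantively from the paper's. You rightly observe that literal stochastic dominance with equal means would force $Y\stackrel{d}{=}Z$; the paper's proof in fact does not establish the claim as literally stated either: it fixes $t$, takes the maximizer of $\mb{P}[\sum Y_i\le t]$ over all parameter vectors with the given sum (existing by compactness), and notes that the restriction along each segment $p_i+p_j=\text{const}$ is linear in $p_ip_j$, so at the maximizer all fractional coordinates must coincide---yielding a $Z$ of the required shape but possibly depending on $t$. Your argument instead fixes $(a,m)$ once from the original parameters and proves Schur-concavity of $q\mapsto\mb{P}[a+\sum W_i\le k]$ on the simplex slice, which requires pinning down the sign of $c_0-2c_1+c_2$; you do this via $\alpha\ge2$, integer $\mu$, and Darroch's theorem on the mode of a Poisson binomial. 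The paper's route is agnostic to that sign (linearity in $p_ip_j$ together with exclusion of boundary points suffices), so it avoids Darroch and the extra hypotheses and works for every $t\ge0$ and every $\alpha$; your route buys a single dominating $Z$ and an explicit explanation of \emph{why} smoothing raises the lower tail in the regime that matters. Both versions are adequate for \cref{thm:main-tail}, since the downstream Poisson/binomial bounds are uniform in $(a,m,p)$.
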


Recall that we say $Y$ \emph{stochastically dominates} $Z$ if $\mb{P}(Y\le t)\le \mb{P}(Z\le t)$ for all $t$. Next, we show the bounds hold for $Y\sim \on{Pois}(\alpha \mu)$, heuristically justifying the theorem.
\begin{claim}
\label{claim:tail-pois}
\cref{eq:tail-eq} holds for $Y\sim \on{Pois}(\alpha \mu)$ for any $\mu \in \mb{Z}_{\ge 0}$ and $\alpha \geq 2$.
\end{claim}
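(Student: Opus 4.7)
The plan is to handle the Poisson case by first verifying the bounds directly for small $\mu$ (where they will match with equality at $\mu = 2$), and then propagating them to larger $\mu$ by a monotonicity argument. For $\mu \in \{0, 1\}$, the bounds are either trivially zero (when the upper threshold $\mu - c$ is negative) or reduce to the single-term probability $\mb{P}[\on{Pois}(\alpha) = 0] = e^{-\alpha}$. For $\mu = 2$, a direct computation with the Poisson PMF yields $\mb{P}[Y \le 0] = e^{-2\alpha}$ and $\mb{P}[Y \le 1] = (1+2\alpha)e^{-2\alpha}$, matching the claimed bounds with equality.

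For $\mu \ge 3$, I would prove the monotonicity $g_c(\mu+1) \le g_c(\mu)$ for $c \in \{1,2\}$, where $g_c(\mu) := \mb{P}[\on{Pois}(\alpha\mu) \le \mu - c]$, so that the values at $\mu=2$ serve as upper bounds for all $\mu \ge 2$. The key tool is Poisson additivity: $\on{Pois}(\alpha(\mu+1)) \stackrel{d}{=} X + Z$ with independent $X \sim \on{Pois}(\alpha\mu)$ and $Z \sim \on{Pois}(\alpha)$. Expanding $g_c(\mu+1) = e^{-\alpha}\sum_j (\alpha^j/j!)\mb{P}[X \le \mu+1-c-j]$ and using $e^\alpha = \sum_j \alpha^j/j!$ to rewrite $g_c(\mu)$, the inequality $g_c(\mu+1) \le g_c(\mu)$ rearranges into
\[
\mb{P}[X = \mu+1-c] \;\le\; \sum_{j \ge 2}\frac{\alpha^j}{j!}\,\mb{P}\bigl[\max(0,\mu+2-c-j) \le X \le \mu - c\bigr].
\]
The Poisson PMF ratio $\mb{P}[X=k-1]/\mb{P}[X=k] = k/(\alpha\mu)$ then reduces this to an elementary inequality in $\alpha$. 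For $c=1$, just the $j=2$ term suffices: it contributes $(\alpha^2/2) \cdot \mb{P}[X=\mu-1] = (\alpha/2)\mb{P}[X=\mu]$, so the sufficient condition is $\alpha/2 \ge 1$, which is exactly $\alpha \ge 2$.

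The main obstacle is the $c=2$ case, which is on the boundary. At $\mu=2$, the first sum on the right-hand side is empty and one needs the full second sum, which gives $e^{-2\alpha}\sum_{j\ge 2}\alpha^j/j! = e^{-2\alpha}(e^\alpha - 1 - \alpha)$ as the bound on the $j\ge 2$ tail; comparing to $\mb{P}[X=1] = 2\alpha e^{-2\alpha}$ reduces the monotonicity to $e^\alpha \ge 1 + 3\alpha$, which holds for $\alpha \ge 2$ (at $\alpha=2$: $e^2 \approx 7.389 > 7$; and $e^\alpha - 1 - 3\alpha$ is increasing for $\alpha > \log 3$). For $\mu \ge 3$ with $c=2$, the $j=2$ term alone no longer suffices, but combining it with a few further terms from the expansion of $e^\alpha$ and exploiting the geometric decay of the Poisson PMF below the mean gives the bound with slack when $\alpha \ge 2$. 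The algebraic bookkeeping for $c=2$ is routine but delicate, since both the target bound and the required inequality $e^\alpha \ge 1+3\alpha$ are essentially tight at $\alpha = 2$.
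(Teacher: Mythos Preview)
Your approach is correct and genuinely different from the paper's. The paper uses the Gamma-integral representation $f_\ell(\mu) = \frac{1}{s!}\int_{\alpha\mu}^\infty x^s e^{-x}\,dx$ with $s=\mu-\ell$, bounds the integrand via the tangent-line inequality for the convex function $x - s\log x$, obtains a closed-form upper bound, and then verifies a handful of small-$\mu$ cases numerically (via plots). You instead exploit Poisson additivity to set up the monotonicity $g_c(\mu+1)\le g_c(\mu)$ directly, reducing everything to elementary inequalities in $\alpha$ alone. Your route avoids both the Gamma function and the graphical verification; moreover, the $c=2$, $\mu\ge 3$ step is less delicate than you suggest: lower-bounding the right-hand side of your displayed inequality by keeping only the $k=\mu-2$ contribution across all $j\ge 2$ gives $(e^\alpha-1-\alpha)\,\mb{P}[X=\mu-2]=(e^\alpha-1-\alpha)\tfrac{\mu-1}{\alpha\mu}\,\mb{P}[X=\mu-1]$, and since $\tfrac{\mu-1}{\mu}\ge\tfrac{2}{3}$ for $\mu\ge 3$ it suffices that $e^\alpha\ge 1+\tfrac{5}{2}\alpha$, which holds for all $\alpha\ge 2$ with room to spare. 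The paper's analytic route, on the other hand, yields an explicit closed-form bound on $f_\ell(\mu)$ that makes the decay in $\mu$ quantitative, at the cost of relying on plots for a few base cases.
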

The $\mu=0$ and $\mu=1$ bounds in \cref{eq:tail-eq} are clear. For the two bounds in \cref{eq:tail-eq} with $\mu \ge 2$, let
\begin{equation}
\label{eq:flk}
f_\ell(\mu) := \mb{P}\left(\on{Pois}(\alpha \mu)\le \mu-\ell\right)
\end{equation}
and they are equivalent to showing $f_1(\mu)\le f_1(2)$ and $f_2(\mu)\le f_2(2)$ for $\mu \ge 2$, which we do in \cref{sec:tail-proofs}. Actually, one can show that $f_\ell(\mu)$ is decreasing in $\mu$ for $\mu \ge 2$ and $\ell \in \{1, 2\}$. An old result of \cite{teicher,aj} showed the case of $\ell=0$ and $\alpha =1$. This is generalized to $\ell=0$ and $\alpha \ge 1$ in \cite{teicher-2} by studying $f_0(\mu+1)-f_0(\mu)$ as a function of $\alpha$, and bootstrapping the $\alpha=1$ case. However, for $\ell=1, 2$, the case of $\alpha=1$ is false. Nevertheless, it holds for $\alpha\ge 2 $, but for brevity we only show the weaker result that $f_\ell(\mu)$ is maximized at $\mu=2$.

Next, we show the bounds for $Y\sim \on{Bin}(m, p)$ by reducing the binomial case to the Poisson case.

\begin{claim}
\label{claim:tail-bin}
\cref{eq:tail-eq} holds for $Y\sim \on{Bin}(m, p)$ for any $p\in [0, 1]$ and $\mu \in \mb{Z}_{\ge 0}$ with $mp = \alpha \mu$.
\end{claim}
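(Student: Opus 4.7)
The plan is to reduce the binomial lower tail to the Poisson lower tail established in \cref{claim:tail-pois} via the pointwise CDF inequality
\[
\mb{P}[\on{Bin}(m, p) \le t] \;\le\; \mb{P}[\on{Pois}(mp) \le t] \qquad \text{for every integer $t$ with $0 \le t < mp$.}
\]
Granting this, \cref{eq:tail-eq} follows by applying the inequality with $mp = \alpha\mu$ and $t \in \{\mu - 1, \mu - 2\}$ (both satisfy $t < \alpha\mu$ when $\alpha \ge 2$ and $\mu \ge 2$), then invoking \cref{claim:tail-pois}.

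I first handle the base cases. For $\mu = 0$, the bounds are trivial. For $\mu = 1$, the only non-trivial inequality is $\mb{P}[Y = 0] = (1-p)^m \le e^{-mp} = e^{-\alpha}$, obtained by raising $1-p \le e^{-p}$ to the $m$-th power. The substantive case is $\mu \ge 2$, which needs the key CDF inequality.

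To prove the CDF inequality, I plan to show that $g_t(m) := \mb{P}[\on{Bin}(m, \lambda/m) \le t]$ is non-decreasing in $m$ (for fixed $\lambda > 0$ and integer $t < \lambda$, over $m \ge \lambda$). Since $g_t(m) \to \mb{P}[\on{Pois}(\lambda) \le t]$ as $m \to \infty$ (Poisson limit theorem), monotonicity from below yields $g_t(m) \le \mb{P}[\on{Pois}(\lambda) \le t]$. For the monotonic step $g_t(m+1) \ge g_t(m)$, I decompose $\on{Bin}(m+1, \pi) = B + \on{Bin}(m, \pi)$ with $\pi = \lambda/(m+1)$ and $B \sim \on{Ber}(\pi)$ independent, obtaining $g_t(m+1) = \mb{P}[\on{Bin}(m, \pi) \le t] - \pi\,\mb{P}[\on{Bin}(m, \pi) = t]$. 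Using the standard identity $\frac{d}{ds}\mb{P}[\on{Bin}(m, s) \le t] = -m\binom{m-1}{t} s^t (1-s)^{m-1-t}$, I express the gap $\mb{P}[\on{Bin}(m, \pi) \le t] - g_t(m)$ as an explicit integral over $s \in [\pi, \lambda/m]$, and the required inequality $g_t(m+1) \ge g_t(m)$ boils down algebraically (at leading order) to $t(m+1) \le m\lambda$, which holds whenever $t \le \lambda - 1$ and $m \ge \lambda$ (both guaranteed in our setting since $p \le 1$ forces $m \ge \lambda$ and $t \le \mu - 1 \le \lambda - 1$).

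The main obstacle will be controlling the lower-order corrections in this integral comparison rigorously, i.e., sharpening the leading-order endpoint approximation of the integrand $s^t(1-s)^{m-1-t}$ into a clean inequality at all points of $[\pi, \lambda/m]$. A cleaner backup is to invoke the classical monotonicity theorem of Jogdeo and Samuels (1968) for $\mb{P}[\on{Bin}(m, \lambda/m) \le t]$ in $m$ with $\lambda$ fixed, which directly furnishes the desired same-mean binomial-versus-Poisson lower-tail domination and bypasses the explicit integral analysis.
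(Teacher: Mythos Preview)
Your plan is correct, but the route differs from the paper's. The paper proves a \emph{pmf}-level comparison: for every integer $\ell\le\mu$ it bounds
\[
\mb{P}[\on{Bin}(m,p)=\ell]\;\le\;\frac{(\alpha\mu)^\ell}{\ell!}\,(1-p)^{\alpha\mu/p-\mu},
\]
and then shows that $f_{\alpha,\mu}(p):=(1-p)^{\alpha\mu/p-\mu}$ is non-increasing on $[0,1]$, hence $\le\lim_{p\to 0^+}f_{\alpha,\mu}(p)=e^{-\alpha\mu}$. Summing the resulting Poisson pmf bound over $\ell\le\mu-1$ (resp.\ $\le\mu-2$) and invoking \cref{claim:tail-pois} finishes. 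The monotonicity of $f_{\alpha,\mu}$ is verified by a sign computation that uses $\alpha\ge 2$ as the base case (and indeed fails for $\alpha$ near $1$), so this argument is fully self-contained but tied to that hypothesis.

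You instead aim for the CDF comparison directly via monotonicity of $g_t(m)=\mb{P}[\on{Bin}(m,\lambda/m)\le t]$ in the integer $m$ (with $\lambda$ fixed and $t<\lambda$), and then pass to the Poisson limit. This is a legitimate and classical route; the monotonicity is due to Anderson and Samuels (\emph{Proc.\ Fifth Berkeley Symp.}, 1967), and your Jogdeo--Samuels (1968) citation is in the same circle of results. Your own integral computation for $g_t(m+1)\ge g_t(m)$ is, as you concede, only a leading-order heuristic---reducing to $t(m+1)\le m\lambda$ does not by itself control the integrand across all of $[\pi,\lambda/m]$---so it is the external reference that actually carries the proof. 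The trade-off: your approach requires only $t<\lambda$ (so in particular any $\alpha\ge 1$ would do) and is cleaner to state, at the price of not being self-contained; the paper's approach is elementary and complete in-house, at the price of the extra calculus needed to handle $f_{\alpha,\mu}$.
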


Finally, we put things together to prove \cref{thm:main-tail}.
Using \cref{claim:tail-smth} and \cref{claim:tail-bin}, it remains to show the reduction from the case of $Y\sim a+\on{Bin}(m, (\mu-a)/m)$ to the case where $a=0$ via a coupling argument. This argument as well as the proofs of the claims are deferred to in \cref{sec:tail-proofs}.

\section{Computation and analysis of approximately stable committees}
In this section, we prove the main theorem, \cref{thm:main}, and describe the algorithm.

\subsection{Key Steps in the Algorithm}
First we provide details for the high level sketch of the algorithm in \cref{sec:pf-sketch}, explained line-by-line.
We start with a Lindahl equilibrium $(\bx, \bp)$ of $\ell_1$ norm $k$ (to be chosen later, given a target committee size $K$). Its fractional core-stability  (\cref{thm:lindahl-stable}) motivates us to round to an integer committee that hopefuly preserves the stability property to some extent. We scale $\bx$ by $\alpha \ge 2$ and define $\bx'$ by $x'_i = \min \{ \alpha x_i, 1 \}$. We adjust $\alpha$ so that $\| \bx' \|_1 = \kappa$ is an integer (we return to this issue in more detail later). Then we consider a strongly Rayleigh distribution $\mc{D}$ provided by \cref{thm:sr-us}. 
\begin{definition}
\label{def:v-delta}
Let $n:=|\cV|$. Let $R$ be sampled from the strongly Rayleigh distribution provided by \cref{thm:sr-us}. For $\ell = 1, 2$, let 
\begin{equation}
\label{eq:def-delta}
\cV_\ell : = \left\{ v \in \cV : |R \cap A_v| \le \left\lfloor \sum_{i \in A_v} x_i \right\rfloor - \ell \right\}\quad\text{and}\quad 
    \delta_\ell := \frac{1}{n} |\cV_\ell|,
\end{equation}
i.e.~the voters who prefer $\bx$ to $R$ by at least a certain amount. Clearly, $\mc{V}_2\subseteq \mc{V}_1$ so $\delta_2\le\delta_1$.
\end{definition}

Now, we leverage \cref{thm:main-tail} to argue that in expectation, not too many voters prefer $\bx$ to $R$ significantly. 
By a simple Markov bound, we also obtain that finding the desirable $R$ takes $O(1)$ tries with high probability.
\begin{lemma}
\label{lem:exp-delta}
Using the notation of \cref{eq:def-delta},
$$\mb{E}\left[\delta_1+\left(e^\alpha -1-2\alpha\right)\delta_2\right]\le e^{-\alpha}$$ and for every $\varepsilon>0$ we have
\begin{equation}
\label{eq:exp-cond}
\mb{P}\left[\delta_1+\left(e^\alpha -1-2\alpha\right)\delta_2\ge (1+\varepsilon)e^{-\alpha}\right] \le \frac{1}{1+\varepsilon}
\end{equation}
\end{lemma}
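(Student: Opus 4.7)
The plan is to obtain a pointwise bound on $\mb{P}[v\in \cV_1]+(e^\alpha-1-2\alpha)\mb{P}[v\in \cV_2]$ for each voter $v$ using the generalized binomial tail estimates of \cref{thm:main-tail}, and then sum over voters by linearity of expectation. The starting point is \cref{thm:sr-us}: for each fixed $v$, the random variable $|R\cap A_v|$ is distributed as a sum of independent Bernoullis whose probabilities add up to $\sum_{i\in A_v} x'_i$.

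The main obstacle is the cap in the definition $x'_i=\min\{\alpha x_i,1\}$: when some coordinates are saturated we may have $\sum_{i\in A_v} x'_i<\alpha\mu_v$ (with $\mu_v:=\sum_{i\in A_v}x_i$), so \cref{thm:main-tail} cannot be applied directly with $\mu=\mu_v$. To fix this, I would split $A_v=A_v^{\mathrm{det}}\sqcup A_v^{\mathrm{rand}}$ where $A_v^{\mathrm{det}}:=\{i\in A_v: x'_i=1\}$ contributes deterministically to $|R\cap A_v|$. Setting $a:=|A_v^{\mathrm{det}}|$ and $\mu':=\mu_v-a$, integrality of $a$ gives $\lfloor\mu'\rfloor=\lfloor\mu_v\rfloor-a$, so the event $\{v\in\cV_\ell\}$ becomes $\{|R\cap A_v^{\mathrm{rand}}|\le \lfloor\mu'\rfloor-\ell\}$. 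The bound $x_i\le 1$ forces $\sum_{i\in A_v^{\mathrm{det}}}x_i\le a$, so the Bernoulli probabilities for $|R\cap A_v^{\mathrm{rand}}|$ (in its generalized binomial representation) add up to $\alpha(\mu_v-\sum_{i\in A_v^{\mathrm{det}}}x_i)\ge \alpha\mu'$, matching the hypothesis of \cref{thm:main-tail} with parameter $\mu'$.

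A case split on $\lfloor\mu'\rfloor$ then produces the pointwise estimates: if $\lfloor\mu'\rfloor\in\{0,1\}$ the theorem gives $\mb{P}[v\in\cV_1]\le e^{-\alpha}$ and $\mb{P}[v\in\cV_2]=0$; if $\lfloor\mu'\rfloor\ge 2$ it gives $(1+2\alpha)e^{-2\alpha}$ and $e^{-2\alpha}$ respectively. The weight $e^\alpha-1-2\alpha$ is calibrated precisely so that
\[(1+2\alpha)e^{-2\alpha}+(e^\alpha-1-2\alpha)e^{-2\alpha}=e^\alpha\cdot e^{-2\alpha}=e^{-\alpha},\]
matching the bound in the $\lfloor\mu'\rfloor=1$ case. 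Hence $\mb{P}[v\in\cV_1]+(e^\alpha-1-2\alpha)\mb{P}[v\in\cV_2]\le e^{-\alpha}$ for every voter, and summing over $v$ and dividing by $n$ yields $\mb{E}[\delta_1+(e^\alpha-1-2\alpha)\delta_2]\le e^{-\alpha}$ by linearity.

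For \cref{eq:exp-cond}, note that $e^\alpha-1-2\alpha>0$ for $\alpha\ge 2$ (the derivative $e^\alpha-2$ is positive on this range and the value at $\alpha=2$ is positive), so $X:=\delta_1+(e^\alpha-1-2\alpha)\delta_2$ is a nonnegative random variable with $\mb{E}[X]\le e^{-\alpha}$, and Markov's inequality immediately gives $\mb{P}[X\ge(1+\varepsilon)e^{-\alpha}]\le 1/(1+\varepsilon)$. The only delicate step in the whole plan is the deterministic/random split, which identifies the correct $\mu$-parameter to plug into \cref{thm:main-tail}; the rest is a direct three-case analysis plus linearity and Markov.
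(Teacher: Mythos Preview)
Your proof is correct and follows essentially the same approach as the paper: both split off the deterministic coordinates (your $A_v^{\mathrm{det}}$ is exactly the paper's $A_v\setminus F_v$), apply \cref{thm:sr-us} to identify the remaining contribution as a generalized binomial, invoke \cref{thm:main-tail} with a case split on the floor of the relevant $\mu$-parameter, and finish with linearity plus Markov. The only cosmetic difference is that the paper takes $\mu_v=\sum_{i\in F_v}x_i$ (which is automatically nonnegative) as the parameter in \cref{thm:main-tail}, whereas your $\mu'=\sum_{i\in A_v}x_i-a$ can in principle be negative---but in that case $\lfloor\mu'\rfloor-\ell<0$ and both probabilities vanish trivially, so this does not affect the argument.
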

\begin{proof}
Fix a voter $v$, let $F_v:=\{i\in A_v: x_i<1/\alpha\}$, and $C_v:=|A_v\setminus F_v|$, so
\[ \mu_v:=\sum_{i\in F_v} x_i \ge \left(\sum_{i\in A_v}x_i\right) -C_v.\]
Let $Y_v:=\sum_{i\in F_v}X_i$. For each $i\not\in F_v$, $\mb{P}[i \in R] = 1$, so $|R\cap A_v| = C_v + Y_v$. For $i\in F_v$, $\mb{E}X_i = \alpha x_i$, so $\mb{E}Y_v = \alpha \mu_v$.
Now, for $\ell = 1, 2$
\[ \mb{P}[v\in \cV_\ell] = \mb{P}\left[|R\cap A_v| \le \left\lfloor\sum_{i\in A_v}x_i\right\rfloor-\ell \right] \le \mb{P}\left[C+Y_v\le \left\lfloor \mu_v+C_v\right\rfloor-\ell\right] \le \mb{P}[Y_v\le \lfloor \mu_v\rfloor -\ell]. \]

Note that $F_v$ and $\mu_v$ depend only on the instance, $\bx$, $\alpha$, and $v \in \cV$. In particular, there is no dependence on $R$.
By \cref{thm:sr-us}, $Y_v$ is a sum of independent Bernoulli's (with possibly different means $p_i\ne \mb{E}X_i$, but the same total expectation $\sum_{i\in F_v} p_i =\mb{E}Y_v = \alpha \mu_v$). We can apply \cref{thm:main-tail} to bound $\mathbb{P}[v\in \cV_1]$ and $\mathbb{P}[v\in \cV_2]$. 
We distinguish two cases.

If $\lfloor \mu_v \rfloor \leq 1$, then
$$ \mb{P}[v \in \cV_1] \leq  e^{-\alpha}, \ \ \
\mb{P}[v \in \cV_2] \leq 0. $$

If $\lfloor \mu_v \rfloor \geq 2$, then
$$ \mb{P}[v \in \cV_1] \leq  (1+2\alpha) e^{-2\alpha}, \ \ \
\mb{P}[v \in \cV_2] \leq e^{-2\alpha}. $$
Either way, we have
$$ \mb{P}[v \in \cV_1] + (e^\alpha - 1 - 2\alpha) \mb{P}[v \in \cV_2] \leq e^{-\alpha}.$$
The lemma follows by adding up over all voters,
and \cref{eq:exp-cond} holds by Markov's inequality.
\end{proof}

Fix any $R$ attaining the inequality in \cref{lem:exp-delta}. Initialize $\cV':=\cV_1\setminus \cV_2$. We can reparametrize
\begin{equation}
\label{eq:delta}
t:=\delta_2 \qquad \text{and}\qquad \delta_1\le f(t):= e^{-\alpha}-(e^\alpha -1-2\alpha)t.
\end{equation}
Therefore, as $\mc{V}_2\subseteq\mc{V}_1$, we have $0\le t = \delta_2\le\delta_1\le f(t)$.
Denote 
\begin{equation}
\label{eq:domain-t}
 t_0 :=\frac{e^{-\alpha}}{e^\alpha-2\alpha}
\end{equation}
and note that $f(t_0) = t_0$. Also, $f$ is decreasing in $t$, and $f(t) < t$ for $t > t_0$.
Therefore, $t > t_0$ cannot occur since that would mean $\delta_1 < \delta_2$.

At Line 6 of the main algorithm, for some choice of $\beta, \gamma$ possibly depending on $t$,  we iteratively find additional candidates approved by at least $\beta n/k$ voters in $\cV'$, and delete those voters from $\cV'$. We denote the additional candidates $R'$ and the voters removed for candidate $i \in R'$ by $V_i$. We stop when either (Case 1) we have found $\lceil \gamma k \rceil$ such candidates, or (Case 2) no more such candidates can be found.
\begin{lemma}[Case 1]
\label{lem:case-1}
Suppose there is a set $R'\subset \cC\setminus R$ of $\lceil \gamma k\rceil$ candidates, and disjoint sets of voters $\{V_i\subset \cV':i\in R'\}$ such that $i\in A_v$ for all $v\in V_i$. Then for every $T\subset \cC$
\[ \left|\left\{v\in (\cV\setminus \cV_1)\cup \bigcup_{i\in R'}V_i: |A_v\cap T|> |A_v\cap (R\cup R')|\right\}\right|\le |T| \frac{n}{k}.\]
\end{lemma}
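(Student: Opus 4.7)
The plan is to distinguish voters by which part of the union $(\cV \setminus \cV_1) \cup \bigcup_{i \in R'} V_i$ they belong to, and to show that in every case, a voter $v$ in the set on the left-hand side of the claimed inequality must satisfy $|A_v \cap T| > u_v(\bx) = \sum_{i \in A_v} x_i$, i.e.\ such a voter strictly prefers the fractional allocation $\mathbf{1}_T$ to $\bx$ in the sense of \cref{thm:lindahl-stable}. Since $\|\mathbf{1}_T\|_1 = |T|$ and $\|\bx\|_1 = k$, that theorem then gives that the number of such voters is strictly less than $|T| n / k$, which is $\leq |T| n /k$ as required.

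First I would observe that the union in question is disjoint: the $V_i$ are disjoint by hypothesis, and each $V_i \subset \cV' = \cV_1 \setminus \cV_2 \subset \cV_1$, so every $V_i$ is disjoint from $\cV \setminus \cV_1$. Next, I would handle the two types of voters. For $v \in \cV \setminus \cV_1$, the negation of \cref{def:v-delta} with $\ell = 1$ gives $|R \cap A_v| \ge \lfloor \sum_{i \in A_v} x_i \rfloor = \lfloor u_v(\bx) \rfloor$, and a fortiori $|(R \cup R') \cap A_v| \ge \lfloor u_v(\bx) \rfloor$. For $v \in V_i$ with $i \in R'$, the condition $v \in \cV_1 \setminus \cV_2$ pins down $|R \cap A_v| = \lfloor u_v(\bx) \rfloor - 1$ exactly; since $i \in R' \subset \cC \setminus R$ and $i \in A_v$, adjoining $i$ yields $|(R \cup R') \cap A_v| \ge \lfloor u_v(\bx) \rfloor$ as well.

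With this uniform lower bound, any voter $v$ in the union satisfying $|A_v \cap T| > |A_v \cap (R \cup R')|$ must have $|A_v \cap T| \ge \lfloor u_v(\bx) \rfloor + 1 > u_v(\bx)$. Applying \cref{thm:lindahl-stable} with $\by = \mathbf{1}_T$ then bounds the total count by $|T| n/k$, finishing the proof.

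\textbf{Main obstacle.} The only subtle step is the integer-to-fractional passage: moving from a strict inequality between integers $|A_v \cap T|$ and $|A_v \cap (R \cup R')|$ to a strict comparison of $|A_v \cap T|$ with the real quantity $u_v(\bx)$. This works because $|A_v \cap T|$ is an integer at least $\lfloor u_v(\bx) \rfloor + 1$, which strictly exceeds $u_v(\bx)$. Everything else is bookkeeping. It is worth noting that the lemma does not use the bounds on $\beta$ or $\gamma$ from the algorithm; those parameters enter only when one later compares Case 1 to Case 2 and tunes the overall recursion, and likewise the choice of $\alpha$ and the defining property of $R$ from \cref{lem:exp-delta} are not needed for this step.
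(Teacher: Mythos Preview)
Your proof is correct and follows essentially the same approach as the paper: in both cases you establish $|A_v\cap(R\cup R')|\ge \lfloor u_v(\bx)\rfloor$, deduce that any $v$ in the set satisfies $|A_v\cap T|>u_v(\bx)$, and invoke the fractional core-stability of the Lindahl equilibrium (\cref{thm:lindahl-stable}) with $\by=\mathbf{1}_T$. The paper's proof is identical in substance; your added remark that the union is disjoint is true but not actually needed for the bound.
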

\begin{proof}
If $v\in \cV\setminus \cV_1$, then by definition of $\cV_1$ \[ |A_v\cap T| >|A_v\cap (R\cup R')|\ge |A_v\cap R|\ge \left\lfloor \sum_{j\in A_v}x_j\right\rfloor\]
If $v\in V_i$, then $i\in A_v\cap (R'\setminus R)$, and $v\in \cV_1\setminus \cV_2$ so
\[  \left\lfloor \sum_{j\in A_v}x_j\right\rfloor-1=|A_v\cap R|< |A_v\cap (R\cup R')|\le |A_v\cap T|\]
Either way, $v$ prefers $T$ to $\mbf{x}$, so there are less than $|T| \frac{n}{k}$ such voters in total by \cref{thm:lindahl-stable}. 
\end{proof}

\begin{lemma}[Case 2]
\label{lem:case-2}
If $R'\subset \cC\setminus R$ is a maximal set of candidates for which we can find disjoint sets of voters $\{V_i\subset \cV':i\in R'\}$ where $i\in A_v$ for all $v\in V_i$,
and $|R'|<\lceil \gamma k\rceil$, then for every $T\subset \cC$
\[ \left|\{v\in \cV\setminus \cV_2: |A_v\cap T|> |A_v\cap (R\cup R')|\}\right|\le (1+\beta)|T| \frac{n}{k}.\]
\end{lemma}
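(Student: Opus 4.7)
The plan is to partition $\cV \setminus \cV_2$ into the voters already handled by the Case~1 argument and those that survived the greedy phase, and to bound each piece separately. Let $\cV'' := \cV' \setminus \bigcup_{i \in R'} V_i$ denote the voters of $\cV_1 \setminus \cV_2$ remaining after the greedy process, and let $W := \{v \in \cV \setminus \cV_2 : |A_v \cap T| > |A_v \cap (R \cup R')|\}$ denote the set whose size we wish to bound. The disjoint decomposition $\cV \setminus \cV_2 = (\cV \setminus \cV_1) \sqcup \bigsqcup_{i \in R'} V_i \sqcup \cV''$ splits $W$ into $W \cap \bigl((\cV \setminus \cV_1) \cup \bigcup_i V_i\bigr)$ and $W \cap \cV''$, which I treat in turn.

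For the first sub-piece, I would rerun the proof of \cref{lem:case-1} without modification: its hypothesis $|R'| = \lceil \gamma k \rceil$ is never actually invoked, only the disjointness of the $V_i$ and the fact that each $V_i \subseteq \cV_1 \setminus \cV_2$ consists of voters approving the associated $i \in R' \setminus R$. The same chain of inequalities---bounding $|A_v \cap R|$ below by $\lfloor \sum_{j \in A_v} x_j \rfloor$ for $v \in \cV \setminus \cV_1$ and by $\lfloor \sum_{j \in A_v} x_j \rfloor - 1$ for $v \in V_i$, then adding the contribution of $i \in R' \setminus R$ in the latter case---combined with integrality of $|A_v \cap T|$, shows that every such $v$ strictly prefers the indicator of $T$ to $\bx$ in the fractional sense. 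Fractional core-stability of the Lindahl equilibrium (\cref{thm:lindahl-stable}) then bounds this sub-piece by strictly less than $|T|\,n/k$.

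For the second sub-piece, I would invoke the maximality hypothesis of Case~2: since the greedy phase terminated, there is no candidate $i \in \cC \setminus (R \cup R')$ together with a disjoint set of $\beta n/k$ voters of $\cV''$ approving $i$. Equivalently, every candidate in $\cC \setminus (R \cup R')$ is approved by strictly fewer than $\beta n/k$ voters in $\cV''$. A voter $v \in \cV'' \cap W$ cannot have $A_v \cap T \subseteq R \cup R'$ (otherwise $|A_v \cap T| \le |A_v \cap (R \cup R')|$), so $v$ approves at least one candidate in $T \setminus (R \cup R')$. Charging each such voter to one such approved candidate and applying a union bound over the at most $|T|$ candidates in $T \setminus (R \cup R')$ yields a bound of $|T| \cdot \beta n/k$ on this sub-piece. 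Adding the two contributions gives $|W| \le (1+\beta)|T|\,n/k$, which is exactly the stated inequality.

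The one step requiring genuine care is the precise reading of ``maximal'' in the lemma's hypothesis: I would need to confirm that the operative size threshold is $\beta n/k$ (matching the fractional representation scale of the election) and that ``maximal'' really does rule out \emph{every} candidate outside $R \cup R'$, not merely those considered by a particular run of the greedy. With that point pinned down, no further probabilistic or tail-bound input from earlier sections is needed---the argument reduces to the Case~1 fractional-core reasoning plus one union bound over $T \setminus (R \cup R')$.
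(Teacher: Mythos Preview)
Your argument is correct and follows the same strategy as the paper: the Lindahl fractional-stability bound for voters who already match $\bx$, plus a union bound over $T\setminus(R\cup R')$ using the termination condition of the greedy for those who do not. Your three-way split $(\cV\setminus\cV_1)\sqcup\bigsqcup_i V_i\sqcup\cV''$ is marginally finer than the paper's two-way split into $\cV\setminus\cV_1$ and $\cV_1\setminus\cV_2$; by folding the $V_i$'s into the Lindahl piece via \cref{lem:case-1} and applying the pair-count only on the residual $\cV''$---where the maximality hypothesis actually bites---you make explicit a step the paper's proof leaves somewhat implicit.
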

\begin{proof}
The number of such $v\in \cV\setminus \cV_1$ is at most $|T| \frac{n}{k}$ similar to \cref{lem:case-1}. The number of such $v\in \cV_1\setminus \cV_2$ is at most the number of pairs of the form
\[ \{(v, i)\in (\cV_1\setminus \cV_2) \times (T\setminus (R\cup R')):i\in A_v\}\]
because we cannot have $|A_v \cap T| > |A_v \cap (R \cup R')|$ if $v$ does not approve any candidate in $T \setminus (R \cup R')$.
Each candidate $i \in T \setminus (R \cup R')$ contributes at most $\beta \frac{n}{k}$ such pairs by the maximality of $R'$, hence there are at most $\beta |T| \frac{n}{k}$ such pairs.
\end{proof}

On Line 7 of the main algorithm, we recurse on the complement of the voters in \cref{lem:case-1,lem:case-2} respectively. Let $K$ be the total committee size. We delete the chosen candidates $R\cup R'$ from the approval sets and find recursively a $\lambda$-approximately stable committee of size $K-(\alpha+\gamma)k$. 

Note that in the randomized rounding in \cref{thm:sr-us} and \cref{lem:case-1}, we technically need integer parameters $\kappa = \lceil \alpha k\rceil$ and $\lceil \gamma k\rceil$.
For now, let us assume that this incurs $2r = O(\log B)$ additional candidates where $r$ is the number of recursive calls of the algorithm. We point out that the same rounding issue seems to arise in \cite{munagala2021approximate,nguyen-song}, and so the claimed factors hold only asymptotically as $k \to \infty$.  We will return to this issue in \cref{sec:full-algo}, where we prove a bound valid for all $k$.

We analyze the final approximation ratio $\lambda$ of this algorithm as a numerical optimization program. We first justify the correctness of this program and then check the feasibility of \cref{eq:program}.

\begin{lemma}
\label{lem:program}
For any $\alpha \ge 2, \eta \in (0, 1)$, $\lambda > 0$, let $f(t) = e^{-\alpha}-(e^\alpha-1-2\alpha) t \ge 0$. Suppose that there exist functions $\beta(t), \gamma(t) \ge 0$ such that for $t \geq 0$,
\begin{equation}
\label{eq:program}
\begin{aligned}
    \lambda_1&:= \frac{1}{\eta}+\frac{\lambda (f(t)-\beta(t) \gamma(t))}{1-(\alpha+\gamma(t))\eta}\le \lambda \\
    \lambda_2&:= \frac{1+\beta(t)}{\eta}+\frac{\lambda t}{1-(\alpha+\gamma(t))\eta}\le \lambda
\end{aligned}
\end{equation}
Then, running the main algorithm with parameter $k=\lceil \eta K\rceil$ will output a committee $S$ of size at most $K+O(\log K)$. Moreover, for every $T \subset \cC$ we have
\[ |\{v\in\cV:|A_v\cap S|<|A_v\cap T|\}|\le \frac{\lambda |T|n}{K}\]
\end{lemma}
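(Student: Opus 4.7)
The plan is to prove both assertions simultaneously by induction on $K$. The base case is $K$ below some constant threshold $K_0 = K_0(\alpha, \eta, \lambda)$, where outputting any committee of size $K$ trivially satisfies the bound. For the inductive step, consider one iteration with parameter $k = \lceil \eta K\rceil$, producing sampled set $R$ (of size $\lceil \alpha k\rceil$), greedy set $R'$ (of size $\le \lceil \gamma(t) k\rceil$), and parameter $t = \delta_2$; the algorithm then recurses on a subinstance with target size $K' := K - (\alpha+\gamma(t))k$ and voters $\cV_{\mathrm{rec}}$. Let $S''$ denote the recursive output and $S := R \cup R' \cup S''$ the final committee.

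For the size bound, one iteration contributes at most $\lceil \alpha k\rceil + \lceil \gamma(t) k\rceil \le (\alpha+\gamma(t))k + 2$ candidates. Feasibility of the program in \cref{eq:program} requires $1 - (\alpha+\gamma(t))\eta > 0$ (otherwise $\lambda_1,\lambda_2$ diverge), so $k \ge \eta K$ yields $K' \le K(1 - (\alpha+\gamma(t))\eta) + O(1)$, giving geometric decay and recursion depth $O(\log K)$. Summing the $2$ extra candidates per level gives $|S| \le K + O(\log K)$.

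For the approximation bound, fix $T \subseteq \cC$ and decompose $T = T_1 \cup T_2$ with $T_1 := T \cap (R \cup R')$ and $T_2 := T \setminus (R \cup R')$. Partition voters into those handled by \cref{lem:case-1} or \cref{lem:case-2} and those in $\cV_{\mathrm{rec}}$. In Case 1 (greedy finds $\lceil\gamma(t) k\rceil$ candidates), voters in $(\cV\setminus\cV_1)\cup\bigcup_{i\in R'}V_i$ with $|A_v\cap T| > |A_v\cap S|$ also satisfy $|A_v\cap T|>|A_v\cap(R\cup R')|$ since $S\supseteq R\cup R'$, so \cref{lem:case-1} bounds them by $|T|n/k$. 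The recursed voters are $\cV_1\setminus\bigcup_{i\in R'}V_i$, totalling at most $(\delta_1 - \beta(t)\gamma(t))n \le (f(t) - \beta(t)\gamma(t))n$ by the greedy construction and \cref{eq:delta}. For a recursed voter with $|A_v\cap T| > |A_v\cap S|$, combining $|A_v\cap T_1|\le|A_v\cap(R\cup R')|$ with $|A_v\cap S|=|A_v\cap(R\cup R')|+|A_v\cap S''|$ forces $|A_v\cap T_2|>|A_v\cap S''|$, which in the reduced instance with approval sets $A_v\setminus(R\cup R')$ is precisely the hypothesis for the inductive bound applied to $T_2$. By induction, the count of such voters is at most $\lambda|T_2|\,|\cV_{\mathrm{rec}}|/K' \le \lambda|T|(f(t)-\beta(t)\gamma(t))n/K'$. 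Summing the two contributions and using $k \ge \eta K$ and $K' \le K(1-(\alpha+\gamma(t))\eta)$ gives
\begin{equation*}
\frac{|T|n}{k} + \frac{\lambda|T|(f(t)-\beta(t)\gamma(t))n}{K'} \le \frac{|T|n}{K}\,\lambda_1 \le \frac{\lambda|T|n}{K}.
\end{equation*}
Case 2 is analogous, using \cref{lem:case-2} and $|\cV_{\mathrm{rec}}|\le tn$, which yields the bound $\lambda_2$.

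The main technical obstacle lies in the decomposition $T = T_1 \cup T_2$: one must verify that a voter preferring $T$ to $S$ in the full instance also prefers $T_2$ to $S''$ in the reduced instance, so that the inductive hypothesis can be invoked on the recursive subproblem with the deleted-candidate approval sets. A secondary point is the $O(1)$ rounding slack per recursion level (from ceiling functions and non-integer $K'$), which is harmless in the size bound after absorption into $O(\log K)$, and for the approximation analysis is treated here as an asymptotic $K\to\infty$ statement; a finer accounting valid for all $k$ is deferred to \cref{sec:full-algo}.
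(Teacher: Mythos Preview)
Your approach is essentially the paper's: induct on the target size, split the deviating coalition between the voters handled at the current level (bounded via \cref{lem:case-1} or \cref{lem:case-2}) and the recursed voters (bounded by the inductive hypothesis), then combine to obtain $\lambda_1$ or $\lambda_2$. Your explicit decomposition $T=T_1\cup T_2$ and the verification that $|A_v\cap T|>|A_v\cap S|$ forces $|A_v\cap T_2|>|A_v\cap S''|$ in the reduced instance is actually more careful than the paper, which applies the inductive hypothesis to ``$T$'' in the subinstance without spelling this out.

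One slip to fix: you invoke ``$K' \le K(1-(\alpha+\gamma(t))\eta)$'' to bound the recursive term, but that inequality points the wrong way---to upper-bound $1/K'$ you need $K' \ge K(1-(\alpha+\gamma(t))\eta)$. Since $k=\lceil \eta K\rceil \ge \eta K$, what actually holds is $K' = K-(\alpha+\gamma)k \le K(1-(\alpha+\gamma)\eta)$, which does not give the desired bound. This is exactly the rounding artifact you (and the paper) defer to \cref{sec:full-algo}; in the idealized regime $k=\eta K$ it becomes an equality, so the cleanest fix is to write $K' = K(1-(\alpha+\gamma)\eta) - O(1)$ and absorb the $O(1)$ into the asymptotic caveat, rather than stating an inequality in the unhelpful direction.
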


In the following, we drop the $t$ argument from $\beta(t), \gamma(t)$ to make the notation less cluttered.

\begin{proof}
Induct on the number of candidates. In Case 1, let $\cV'' = \cV_1\setminus \bigcup_{i\in R'}V_i$. By \cref{lem:case-1}, the deviating coalition of voters in $\cV\setminus \cV''$ who prefer $T$ to $R\cup R'$ has size at most $n|T|/k$. The number of remaining voters for the recursive call is
\[ |\cV''| \leq |\cV_1| - |R'|\cdot \frac{\beta n}{k} \le  (\delta_1-\beta\gamma)n, \]
and the size of the remaining committee to be found is $K - (\alpha+\gamma) k$, so by the inductive hypothesis the deviating coalition that prefers $T$ to its output has size at most $\lambda |T|\cdot |\cV''|/(K-(\alpha+\gamma)k)$. The total deviating coalition is bounded as
\[ |\{v\in\cV:|A_v\cap S|<|A_v\cap T|\}|\le  \frac{n|T|}{k}+\frac{\lambda |T|(\delta_1-\beta\gamma)n}{K-(\alpha+\gamma)k}\le  \underbrace{\left(\frac{1}{\eta}+\frac{\lambda(f(t)-\beta\gamma)}{1-(\alpha+\gamma)\eta}\right)}_{\lambda_1}\frac{|T|n}{K}\]

Similarly in Case 2, by \cref{lem:case-2} and the inductive hypothesis, the deviating coalition has size
\[ |\{v\in\cV:|A_v\cap S|<|A_v\cap T|\}|\le \frac{(1+\beta)n|T|}{k}+\frac{\lambda |T|\delta_2n}{K-(\alpha+\gamma)k} = \underbrace{\left(\frac{1+\beta}{\eta}+\frac{\lambda t}{1-(\alpha+\gamma)\eta}\right)}_{\lambda_2}\frac{|T|n}{K}\]
Together, \cref{eq:program} guarantees the overall approximation ratio $\lambda$, completing the induction.
\end{proof}

We then solve the program \cref{eq:program} to minimize approximation ratio $\lambda$.
\begin{lemma}
\label{lem:parameters}
\cref{eq:program} is feasible for $\lambda = 3.606655$ by choosing
$(\alpha, \eta, \gamma)= (2.154564, 0.35870, 0.30328)$ and $\beta(t) = (\lambda\eta -1)(1-t/t_0)$ where $t_0$ is defined in \cref{eq:domain-t}.
\end{lemma}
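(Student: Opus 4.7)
The plan is to reduce verification of the stated feasibility to a finite check, by exploiting the affine dependence of both $\lambda_1$ and $\lambda_2$ on $t$. First I would observe that, with the given formulas, $\lambda_1(t)$ and $\lambda_2(t)$ are affine functions of $t$ on the relevant domain $t \in [0, t_0]$. Indeed $f(t) = e^{-\alpha} - (e^{\alpha}-1-2\alpha) t$ is affine by definition, $\beta(t) = (\lambda\eta - 1)(1 - t/t_0)$ is affine because $t_0$ is a constant (depending only on $\alpha$), the product $\beta(t)\gamma$ is then affine since $\gamma$ is a constant, and the remaining pieces are either constants or the linear term $\lambda t/(1-(\alpha+\gamma)\eta)$. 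Since an affine function on an interval attains its maximum at an endpoint, it suffices to verify the two inequalities in \cref{eq:program} at $t = 0$ and at $t = t_0$.

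Next I would evaluate the endpoints. At $t = t_0$ we have $\beta(t_0) = 0$ and, by the definition \cref{eq:domain-t} of $t_0$, also $f(t_0) = t_0$; hence $\lambda_1(t_0)$ and $\lambda_2(t_0)$ collapse to the same quantity
\[ \Lambda := \frac{1}{\eta} + \frac{\lambda t_0}{1-(\alpha+\gamma)\eta}, \]
so at $t_0$ only the single inequality $\Lambda \le \lambda$ must be checked. At $t = 0$ we have $\beta(0) = \lambda\eta - 1$, so
\[ \lambda_2(0) = \frac{1+(\lambda\eta-1)}{\eta} + 0 = \lambda, \]
and the $\lambda_2$ constraint is automatically tight. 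Thus only $\lambda_1(0) \le \lambda$ remains, i.e.
\[ \frac{1}{\eta} + \frac{\lambda\bigl(e^{-\alpha} - \gamma(\lambda\eta - 1)\bigr)}{1-(\alpha+\gamma)\eta} \le \lambda. \]
The upshot is that the whole lemma reduces to checking exactly two scalar inequalities in the five scalars $\alpha, \eta, \gamma, \lambda, t_0$.

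Finally I would substitute the numerical values $(\alpha, \eta, \gamma) = (2.154564, 0.35870, 0.30328)$ and $\lambda = 3.606655$, compute $t_0 = e^{-\alpha}/(e^{\alpha} - 2\alpha)$, and verify both the identity $\Lambda = \lambda$ at $t_0$ and the inequality $\lambda_1(0) \le \lambda$ to the appropriate precision. The parameters were obtained by numerical optimization in which these two conditions are saturated simultaneously (they are in fact the active constraints that determine the optimal $\lambda$), so both should hold with near-equality.

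The main obstacle is not mathematical but numerical: because $\lambda$ is quoted to six decimal places, one must carry out the arithmetic with enough precision that the inequalities can be rigorously certified rather than merely approximately observed. Since all quantities involved are explicit elementary functions of $(\alpha,\eta,\gamma,\lambda)$, this is routine interval arithmetic; no further analytic work is required beyond the reduction to the two endpoint conditions outlined above.
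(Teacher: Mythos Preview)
Your approach is correct. The reduction via affinity to the two endpoints $t=0$ and $t=t_0$, together with the observations that $\lambda_2(0)=\lambda$ exactly and $\lambda_1(t_0)=\lambda_2(t_0)=\Lambda$, is valid, and leaves only $\Lambda\le\lambda$ and $\lambda_1(0)\le\lambda$ to be checked numerically. (You should also record the side condition $\beta(t)\ge 0$ on $[0,t_0]$, required by \cref{eq:program}; it amounts to $\lambda\eta>1$, which is immediate from the given values.)

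The paper's proof is structurally different in a small but informative way. Rather than merely using affinity and checking endpoints, the paper records the two defining relations that pin down $\eta$ and $\gamma$ for given $\alpha,\lambda$:
\[
\lambda=\frac{1}{\eta}+\frac{\lambda t_0}{1-(\alpha+\gamma)\eta}
\qquad\text{and}\qquad
\gamma=\frac{t_0(e^{\alpha}-1-2\alpha)}{\lambda\eta-1}.
\]
The second relation yields the identity $\gamma\,\beta(t)=f(t)-t_0$, so $f(t)-\beta(t)\gamma\equiv t_0$ and hence $\lambda_1(t)\equiv\Lambda$; the first relation (together with the equivalent form $\beta(t)=\lambda\eta(t_0-t)/(1-(\alpha+\gamma)\eta)$) gives $\lambda_2(t)\equiv\Lambda=\lambda$. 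Thus in the paper both constraints are shown to be \emph{equalities for every $t$}, not just at the endpoints, and only the single numerical equality $\Lambda=\lambda$ needs verification. Your route reaches the same conclusion but requires two numerical checks and does not expose why $\lambda_1$ happens to be constant in $t$; the paper's argument explains this as a direct consequence of how $\gamma$ was chosen.
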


\begin{proof} 
For any $\alpha, \lambda$, we choose 
$\gamma, \eta$ such that
\begin{equation}
\label{eq:beta-gamma}
\lambda = \frac{1}{\eta}+\frac{\lambda t_0}{1-(\alpha+\gamma)\eta}\quad\text{and}\quad \gamma = \frac{t_0(e^\alpha-1-2\alpha)}{\lambda\eta-1}
\end{equation}
We then minimize $\lambda$ numerically over $\alpha\ge 2$ to obtain $\alpha =2.154564$ and $ \lambda = 3.606655$. Then, \cref{eq:beta-choice} solves numerically to $\gamma = 0.30328$ and $\eta = 0.35870$. We show this is feasible for \cref{eq:program} where
\begin{equation}
\label{eq:beta-choice}
\beta(t) = \left(\lambda\eta-1\right)\left(1-\frac{t}{t_0}\right) = \frac{\lambda \eta(t_0-t)}{1-(\alpha+\gamma)\eta}\in [0, 1]
\end{equation}
using \cref{eq:beta-gamma} and that $t\in [0, t_0]$. Then we compute
\[ \gamma \cdot \beta(t)= \left(\lambda\eta-1\right)\left(1-\frac{t}{t_0}\right) \left(\frac{t_0(e^\alpha-1-2\alpha)}{\lambda\eta-1}\right) = (t_0-t) (e^\alpha-1-2\alpha)=f(t)-f(t_0)=f(t)-t_0\]
and therefore
\begin{equation}
    \begin{aligned}
    \lambda_1 &= \frac{1}{\eta} +\frac{\lambda (f(t)-\beta(t)\gamma(t))}{1-(\alpha+\gamma(t))\eta} =  \frac{1}{\eta} +\frac{\lambda t_0}{1-(\alpha+\gamma)\eta} = \lambda\\
    \lambda_2 &= \frac{1+\beta(t)}{\eta}+\frac{\lambda t}{1-(\alpha+\gamma(t))\eta} = \frac{1}{\eta}+\frac{\lambda t_0}{1-(\alpha+\gamma)\eta}
    =\lambda
    \end{aligned}
\end{equation}
using the identities for $\beta(t)$ and $\gamma \cdot \beta(t)$ from above.
\end{proof}

We remark that $\lambda = 3.606655$ is (roughly) the minimal feasible value of \cref{eq:program} even when $\gamma$ is allowed to be a function of $t$. We chose $\gamma $ to be constant for clarity, and maximized for the given $\alpha$ and $\lambda$ to minimize the number of recursive calls, thereby minimizing the rounding error. This choice is optimal by considering $t=t_0$ and noting $\beta(t) \ge 0$.

\begin{algorithm}
\DontPrintSemicolon
\caption{Detailed Main Algorithm}
\label{alg:alg-1}
\KwIn{An ABC instance $(\cV, \cC, \{A_v : v \in \cV\})$ and $K \le |\cC|$}
\KwOut{A committee $S$ of size $K + O(\log K)$
}
\If{$K \le 28$}{
   \Return Committee given by \cref{cor:base-case}
}
$\alpha \gets 2.154564$\; 
$\eta \gets 0.358696$\;
$\varepsilon \gets 10^{-10}$\;
$k \gets \lceil \eta K \rceil$\;
Compute a Lindahl equilibrium $\mbf{x}$ with $\|\mbf{x}\|_1 = k$ as in \cref{thm:lindahl-compute}\;
Construct a strongly Rayleigh distribution $\cD$ on $2^{\cC}$ as in \cref{thm:sr-us}, such that $\mathbb{P}_{R \sim \cD}[i \in R] \geq \min \{\alpha x_i, 1\}$ and $|R| = \lceil \alpha k\rceil $\;

\For{$\ell = 0,1$}{
    $\cV_\ell \gets \left\{ v \in \cV : |R \cap A_v| \le \left\lfloor \sum_{i \in A_v} x_i \right\rfloor - \ell \right\}$\;
    $\delta_\ell \gets |\cV_\ell| / n$\;
}

Sample $R \sim \cD$ until
$\delta_1 + (e^{\alpha} - 1 - 2\alpha)\delta_2 \le (1+\varepsilon)e^{-\alpha}$\;
$R' \gets \emptyset$\;
$\cV' \gets \cV_1 \setminus \cV_2$\;
Choose $\beta, \gamma$ based on \cref{lem:parameters}\;
\While{$|R'| \le \lceil \gamma k \rceil$ \textbf{and} there exists $i \in \cC \setminus (R \cup R')$ with $|\{v \in \cV' : i \in A_v\}| \ge \beta k / n$}{
    $R' \gets R' \cup \{i\}$\;
    $\cV' \gets \cV' \setminus \{v \in \cV' : i \in A_v\}$\;
}
\If{$|R'|=\lceil \gamma k\rceil$}{$\cV''\gets \cV_2\cup \cV'$}
\Else{$\cV''\gets \cV_2$}
$\cC''\gets  \cC\setminus (R\cup R')$\;
$R''\gets \text{Algorithm 1 }(\cV'', \cC'', \{A_v\cap \cC'':v\in \cV''\}, K-(\alpha+\gamma)k)$\;
\Return $R \cup R' \cup R''$\;
\end{algorithm}

\begin{algorithm}
\DontPrintSemicolon
\caption{$3.651$-Approximately Stable Committee Selection for ABC Elections}
\label{alg:alg-1}

\KwIn{An ABC instance $(\cV, \cC, \{A_v : v \in \cV\})$ and positive integer $K\le |\cC|$}
\KwOut{A $3.651$-approximately core stable committee of size $K$}
$\rho\gets 0.00703$\;
$K' \gets K/(1+2\rho)$\;
$S\gets \text{Algorithm 1 }(\cV, \cC, \{A_v:v\in \cV\}, K')$\;
$S'\gets $ any $K-|S|$ candidates $S'$ from $\cC\setminus S$\;
\Return $S\cup S'$
\end{algorithm}

\subsection{The Full Algorithm}
\label{sec:full-algo}
In this section, we give the full main algorithm and deal with the rounding issue rigorously; this increases the approximation ratio from $3.606$ to $3.651$. Let a positive integer $K \le |\cC|$ be the target committee size. As our algorithm is recursive, we first consider the base case when $K$ is small.
\begin{theorem}[\cite{peters2025core}]
\label{thm:pav}
If $K\le 8$, there is an exactly stable committee of size $K$.
\end{theorem}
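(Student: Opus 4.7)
The statement is attributed to \cite{peters2025core}, so the goal of the proposal is to re-derive their result from scratch. At a high level, the plan is to split into the range where a known voting rule is directly core-stable, and the range where that rule breaks down and a finite case analysis is required.

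First, I would handle the range $K\le 7$ via the \emph{Proportional Approval Voting} rule, which picks a committee $S$ of size $K$ maximizing
\[
\mathrm{PAV}(S) = \sum_{v\in\cV}\sum_{j=1}^{|A_v\cap S|}\frac{1}{j}.
\]
Assume for contradiction that the PAV-optimal $S$ is blocked by some $T$ with blocking coalition $B=\{v:|A_v\cap T|>|A_v\cap S|\}$ of size at least $|T|n/K$. I would do the standard fractional exchange argument: consider the weighted bipartite exchange where each voter $v\in B$ contributes $\tfrac{1}{|A_v\cap S|+1}$ of extra PAV-utility per approved candidate added from $T\setminus S$, and loses at most $\tfrac{1}{|A_v\cap S|}$ per candidate removed from $S\setminus T$. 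Averaging over the candidates in $T\setminus S$ and using $|B|\ge |T|n/K$, one derives that some swap must strictly increase $\mathrm{PAV}$, contradicting the optimality of $S$. The analysis goes through for $K\le 7$ because the harmonic gap $H(K)-H(K-1)$ is still large enough; the numerics of this calculation are exactly what forces the cutoff at $K=7$.

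Next, for $K=8$, PAV is known to fail, so I would take a finite-case-analysis approach. The first step is a reduction: without loss of generality, (i) voters with identical approval sets can be merged into a single voter-type with multiplicity, (ii) candidates not approved by anyone can be deleted, and (iii) the set of distinct voter-types is bounded by $2^m-1$. A second reduction bounds the number of candidates by a function of $K$: if $|\cC|$ is much larger than $K$, one can restrict attention to candidates appearing in \emph{some} committee that is ``locally undominated'', shrinking the search space. These two reductions cut the problem down to a bounded family of equivalence classes of instances. Within each class, I would solve an integer program (or invoke a SAT solver) that enforces the nonexistence of a core-stable committee and check infeasibility; a certificate of infeasibility for each equivalence class then proves the theorem.

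The main obstacle is clearly the $K=8$ case. The PAV exchange argument genuinely stops working, and there is no obvious drop-in replacement rule. So the crux is the reduction step: one has to argue that verifying core-nonemptiness for a finite, computationally tractable family of canonical instances is sufficient. Without such a reduction the search space is unbounded. Peters et al.\ carry this out by a careful symmetry/merging argument together with bounds relating $|\cC|$ and $K$; replicating that bound tightly enough for the resulting enumeration to terminate is the delicate part. Once that bound is in hand, the remaining verification is mechanical.
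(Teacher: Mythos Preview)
The paper does not prove this theorem at all: it is quoted as an external result from \cite{peters2025core} and used as a black box (only to feed into \cref{cor:base-case}). So there is no ``paper's own proof'' to compare against; your proposal is an attempt to reconstruct the cited result.

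Your split into ``PAV works for $K\le 7$'' and ``computer-aided case analysis for $K=8$'' matches what is stated about \cite{peters2025core} in the related-work section, so the overall plan is reasonable. The PAV sketch is in the right spirit, though the exchange argument you wrote is too loose to be a proof as stated (the inequality you need is not simply $H(K)-H(K-1)$ being ``large enough''; the actual bound is more delicate and is precisely what breaks at $K=8$).

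The real gap is in the $K=8$ part. Your two reductions do \emph{not} yield a finite family of instances. Merging identical voters gives voter-\emph{types} with unbounded integer multiplicities, and your candidate bound (``restrict attention to candidates appearing in some locally undominated committee'') is asserted, not proved; it is far from obvious that $|\cC|$ can be bounded purely in terms of $K$. Even granting a bound on $|\cC|$, you still have a continuum of instances indexed by the multiplicity vector, so ``solve an IP / SAT instance for each equivalence class'' is not well-defined. What actually makes the verification finite is that the blocking condition $|\{v:T\succ_v S\}|\ge |T|\,n/K$ is \emph{linear} and \emph{homogeneous} in the voter-type multiplicities: after normalizing $n=1$, the set of profiles with empty core is a finite Boolean combination of half-spaces, so emptiness can be decided by LP (or vertex enumeration) once the candidate set is fixed. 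Your proposal does not identify this polyhedral structure, and without it the ``finite case analysis'' does not terminate.
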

\begin{corollary}
\label{cor:base-case}
If $K\le 28$, there is always a $3.5$-approximately stable committee of size $K$ computable in polynomial time.
\end{corollary}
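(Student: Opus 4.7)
The plan is to reduce \cref{cor:base-case} to \cref{thm:pav} via a simple padding argument. Given $K \le 28$, set $k_0 := \min\{K, 8\}$, invoke \cref{thm:pav} to compute an exactly core-stable committee $S_0 \subset \cC$ of size $k_0$, and then complete the output to $S := S_0 \cup S'$ by choosing an arbitrary $S' \subset \cC \setminus S_0$ with $|S'| = K - k_0$ (such an $S'$ exists since $K \le |\cC|$).

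For the approximation analysis, observe that $A_v \cap S \supseteq A_v \cap S_0$ for every voter $v$, so $T \succ_v S$ implies $T \succ_v S_0$. Combined with the fact that $S_0$ is in the exact core, this yields
\[ |\{v \in \cV : T \succ_v S\}| \le |\{v \in \cV : T \succ_v S_0\}| < \frac{|T|n}{k_0} \le 3.5 \cdot \frac{|T|n}{K}, \]
where the last inequality uses $K/k_0 \le \max\{1, K/8\} \le 28/8 = 3.5$. This matches \cref{def:approx-stable} with $\lambda = 3.5$ and yields the claimed bound. Note that the choice $k_0 = \min\{K,8\}$ is tight in the sense that padding from any $k_0 < 8$ when $K \in \{25,\ldots,28\}$ would exceed the ratio $3.5$, which is why we need the full strength of \cref{thm:pav} (up to $K=8$).

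The main technicality is polynomial-time computability of the base case, which is not explicit in the statement of \cref{thm:pav}. For $K \le 7$ the classical PAV rule returns a core-stable committee \cite{peters2025core}, and since $K$ is a fixed constant its winner can be computed by brute-force enumeration over all $\binom{|\cC|}{K} = O(|\cC|^{7})$ size-$K$ committees, each evaluated in polynomial time. For $K = 8$, the verification of \cite{peters2025core} is effective in the same sense: enumerate all $\binom{|\cC|}{8} = O(|\cC|^8)$ committees and apply the core-stability check developed in \cite{peters2025core}; at least one must succeed by the existence guarantee. Accepting this algorithmic reading of \cref{thm:pav} as a black box closes the proof, and is the only non-routine step.
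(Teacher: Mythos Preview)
Your proof is correct and follows essentially the same approach as the paper: pad an exactly stable committee of size $\min\{K,8\}$ (guaranteed by \cref{thm:pav}) with arbitrary candidates and use $K/\min\{K,8\} \le 28/8 = 3.5$. The paper handles the polynomial-time computability slightly more directly, enumerating all $O(|\cC|^8)$ candidate committees and checking each against all $O(|\cC|^8)$ potential blocking committees for a total of $O(|\cC|^{16})$, rather than splitting into the PAV case $K\le 7$ and the $K=8$ case as you do.
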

\begin{proof}
There is an exactly stable committee $S$ of size $\min(K, 8)$ (based on the PAV rule) by  \cref{thm:pav}. We can compute this in $O(|\cC|^{16})$ time by enumerating through all $O(|\cC|^8)$ possible $S$ and checking against all $O(|\cC|^8)$ possible smaller blocking committees.

We add $K-\min(K,8)$ candidates arbitrarily to get a committee $S'$ of size $K$. Then the deviating coalition of any $T$ is at most
\[ \frac{|T|n}{|S|} = \frac{|T|n}{\min(K, 8)}\le \frac{3.5|T|n}{K}.\]
\end{proof}

The number of recursive calls of Algorithm 1 is
$r=\log_{1-(\alpha+\gamma)\eta} \left(28/{K'}\right) $ with the parameters in \cref{lem:parameters}. The following lemma formalizes the analysis of the rounding error at the expense of the approximation ratio.
\begin{lemma}
\label{lem:recur}
The output $S$ of invoking Algorithm 1 on Line 3 of Algorithm 2 satisfies $|S|\le K'+2r \le K$.
\end{lemma}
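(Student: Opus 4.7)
The plan is to establish $|S| \le K' + 2r$ by induction on the depth of recursion in Algorithm 1, and then verify $K' + 2r \le K$ using the specific choice of $\rho = 0.00703$ in Algorithm 2.

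For the first inequality I would induct on the recursion depth $r$. In the base case $K' \le 28$, no recursive call is made and \cref{cor:base-case} returns a committee of exactly $K'$ candidates, so $|S| = K' = K' + 2 \cdot 0$. For the inductive step at recursion target $K_i$, Algorithm 1 sets $k_i = \lceil \eta K_i \rceil$, chooses $R \cup R'$ of size at most $\lceil \alpha k_i \rceil + \lceil \gamma k_i \rceil \le (\alpha+\gamma)k_i + 2$ (the two ceilings each contributing at most $+1$), and then recurses with target $K_{i+1} = K_i - (\alpha+\gamma)k_i$. By inductive hypothesis applied to the recursive call, $|R''| \le K_{i+1} + 2(r-1)$, so $|S| = |R \cup R'| + |R''| \le (\alpha+\gamma)k_i + 2 + K_{i+1} + 2(r-1) = K_i + 2r$. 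Specializing to the top level gives $|S| \le K' + 2r$.

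Next, to bound the recursion depth, I would use $k_i \ge \eta K_i$ to deduce $K_{i+1} \le (1 - (\alpha+\gamma)\eta) K_i$; iterating yields $K_i \le (1-(\alpha+\gamma)\eta)^i K'$. The recursion terminates once $K_i \le 28$, which happens after at most $r \le \log_{1-(\alpha+\gamma)\eta}(28/K')$ levels, matching the formula in the statement of the lemma. For the second inequality, I would substitute $K = (1+2\rho) K'$ so that $K' + 2r \le K$ reduces to $r \le \rho K'$. With the parameter values $(\alpha, \eta, \gamma) = (2.154564, 0.358696, 0.30328)$ from \cref{lem:parameters}, one computes $1 - (\alpha+\gamma)\eta \approx 0.1184$, and so the condition becomes $\log(K'/28) \le 2\rho \cdot \log(1/0.1184) \cdot K' \approx 0.015 K'$. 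This is a one-variable calculus check: the function $h(K') := 0.015 K' - \log(K'/28)$ has derivative $0.015 - 1/K'$, vanishing at $K' \approx 66.7$ with $h(66.7) \approx 0.132 > 0$, so $h(K') > 0$ throughout $K' > 28$; for $K' \le 28$ we have $r = 0$ and the bound is immediate.

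The main obstacle I anticipate is bookkeeping around rounding. The recursive target $K_i - (\alpha+\gamma) k_i$ is generally non-integer, and the tight choice $\rho = 0.00703$ leaves only a thin margin in $r \le \rho K'$; one must ensure that no further $O(1)$ per-level losses creep in beyond the $+2$ from the two ceilings $\lceil \alpha k_i \rceil$ and $\lceil \gamma k_i \rceil$. A careful treatment either rounds all non-integer quantities consistently in the same direction or absorbs the tiny overruns into the base case padding, so that the telescoping in the induction stays clean and the numerical check in the last step applies without modification.
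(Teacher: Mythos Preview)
Your approach is essentially the same as the paper's: it invokes \cref{lem:program} for $|S|\le K'+2r$ (whose proof is exactly the induction you outline) and then verifies $r/K'\le\rho$ by maximizing $g(t)=\frac{1}{t}\log_{1-(\alpha+\gamma)\eta}(28/t)$ over $t\in\mathbb{N}$, locating the maximum at $t=76$. One minor slip: your ``$2\rho$'' should read ``$\rho$'' (your own value $0.015\approx\rho\cdot\ln(1/0.1184)$ confirms this), and the paper uses the looser bound $1-(\alpha+\gamma)\eta\le 0.15405$ rather than your tighter $\approx 0.1184$, but either suffices for the check.
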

\begin{proof}
$|S|\le K'+2r$ by \cref{lem:program} where $r=\log_{1-(\alpha+\gamma)\eta} \left({28}/{K'}\right)$. As $1-(\alpha+\gamma)\eta\le 0.15405$ and \[ g(t)=\frac{1}{t}\log_{0.15405}\left(\frac{28}{t}\right)\]
is maximized over $t\in \mb{N}$ at $g(76)\le 0.00703 =\rho$, we have $|S|\le K'
 (1+2\rho)=K$. 
\end{proof}
\begin{theorem}
\label{thm:correctness}
Algorithm 2 has expected polynomial running time and outputs a $3.651$-approximately stable committee for any given size $K \in \mb{N}$.
\end{theorem}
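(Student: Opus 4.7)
The plan is to combine \cref{lem:program}, \cref{lem:parameters}, and \cref{lem:recur} with a simple padding argument, and then analyze the running time separately.

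First I would handle the base case: if $K\le 28$, \cref{cor:base-case} already produces a $3.5$-approximately stable committee (which is stronger than what we need) in polynomial time, so I may assume $K>28$. For such $K$, Algorithm 2 invokes Algorithm 1 with target $K' = K/(1+2\rho)$; by \cref{lem:recur} the resulting committee $S$ satisfies $|S|\le K'+2r\le K$, so I can pad $S$ with $K-|S|$ arbitrary candidates from $\cC\setminus S$ to obtain a committee $S\cup S'$ of size exactly $K$.

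For the approximation factor, \cref{lem:program} combined with the parameter choice of \cref{lem:parameters} yields
\[ \left|\left\{v\in \cV: |A_v\cap S|<|A_v\cap T|\right\}\right|\le \frac{\lambda |T|n}{K'} \]
for $\lambda = 3.606655$ and every $T\subset \cC$. Because adding candidates only increases $|A_v\cap(\cdot)|$, the deviating coalition against the padded committee $S\cup S'$ is contained in the deviating coalition against $S$, so the same bound persists. Substituting $K' = K/(1+2\rho)$ converts the right-hand side to $\lambda(1+2\rho)\cdot |T|n/K\le 3.651\cdot |T|n/K$, which is exactly the $3.651$-approximate stability condition of \cref{def:approx-stable}.

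For the running time, each recursive call of Algorithm 1 computes a Lindahl equilibrium in polynomial time by \cref{thm:lindahl-compute}, constructs and samples from a strongly Rayleigh distribution in polynomial time by \cref{thm:sr-us}, performs rejection sampling with an expected $O(1)$ number of attempts (via Markov's inequality applied to \cref{lem:exp-delta}), and finishes with a polynomial-time greedy phase. Since the remaining target size shrinks by a factor $1-(\alpha+\gamma)\eta<1$ at each level, the recursion depth is $r=O(\log K)$, so the overall expected running time is polynomial in $|\cV|$, $|\cC|$, and $K$. The main obstacle I anticipate is the delicate numerical accounting: one must verify that the slack $\rho = 0.00703$ fixed in \cref{lem:recur} is simultaneously small enough that $\lambda(1+2\rho)\le 3.651$, yet large enough to absorb the $2r$ extra candidates introduced by the integer roundings $\lceil \alpha k\rceil$ and $\lceil \gamma k\rceil$ at every recursion level above the base case; a secondary subtlety is converting the per-sample constant-probability guarantee of \cref{lem:exp-delta} into a bona fide $O(1)$ expected sample count via a geometric-trial argument, then summing the expected work across all $O(\log K)$ recursion levels.
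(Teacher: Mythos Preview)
Your proposal is correct and follows essentially the same route as the paper's proof: invoke \cref{lem:program} and \cref{lem:parameters} for the $\lambda=3.606655$ guarantee at scale $K'$, use \cref{lem:recur} to ensure $|S|\le K$, pad arbitrarily (noting monotonicity of $|A_v\cap\cdot|$), and then rescale by the factor $1+2\rho$ to obtain $3.651$. The only item you omit that the paper mentions explicitly is the tiny $(1+\varepsilon)$ slack from the rejection-sampling threshold on line~12 of Algorithm~1, which the paper absorbs into the rounding of the numerical constants; otherwise the arguments coincide.
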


\begin{proof}
On line 12 of Algorithm 1, it takes $O(1)$ expected samples to find a good set $R$ by \cref{lem:exp-delta}. As $\varepsilon$ is a sufficiently small constant, the error from $1+\epsilon$ in the inequality on line 12 can be absorbed into the error incurred by rounding up to four or five decimal places everywhere, so we disregard $\varepsilon$. The rest of the algorithm runs in polynomial time as the base case does by \cref{cor:base-case} and there are $O(\log K)$ recursive steps.

For correctness, note that by \cref{lem:program,lem:parameters}, upon running the main algorithm, we find a committee $S$ of size $K'+2r$ such that the deviating coalition for any alternative committee $T\subset \mc{C}$ is at most
\[ \frac{3.6067|T|n}{K'}\le  \frac{3.6067(1+2\rho)|T|n}{K}\le\frac{3.651|T|n}{K}\]
By \cref{lem:recur}, $|S|\le K'+2r \leq K$ so we always output a committee of size exactly $K$.
\end{proof}


\bibliographystyle{amsplain0.bst}
\bibliography{main.bib}

\appendix

\section{Proofs of Tail Bound for Generalized Binomial Distributions}
\label{sec:tail-proofs}
Here we prove the claims from \cref{sec:tail-bounds}.
\begin{proof}[Proof of \cref{claim:tail-smth}]
We fix any $t\in\mb{Z}_{\ge 0}$ and show $\mb{P}(Y\le t)\le \mb{P}(Z\le t)$.
Let $(p_i)_{i=1}^n$ be the maximizer of $\mb{P}(Y\le t)$ subject to $\sum_{i=1}^n p_i=\mu$ and $p_i \in [0,1]$, which exists by compactness.
Let $A=\{i\in[n]:p_i=1\}$ with $a=|A|$ and $B=\{i\in [n]:p_i=0\}$. Without loss of generality, we assume that $[n]\setminus (A\cup B) = [m]$, and let $Y' = \sum_{i=1}^{m}Y_i$, so $Y=Y'+a$ with probability one.

If $m=0$ or $m=1$, $Y$ can already be written in the form of $a+\on{Bin}(m, p)$, and we are done.
Otherwise, $(p_1, \dots, p_{m})$ maximizes $\mb{P}(Y'\le t-a)$. Pick any two indices $i, j\in [m]$.  
\begin{align*}
\mb{P}(Y'=s) & =\sum_{S\in \binom{[m]}{s}}\prod_{\ell\in S}p_\ell\prod_{\ell\not\in S\cup\{i, j\}}(1-p_\ell)
\\ & =p_{i}p_{j}\sum_{S\in\binom{[m]\setminus{\{i, j\}}}{s-2}}\prod_{\ell\in S}p_\ell\prod_{\ell\not\in S\cup\{i, j\}}(1-p_\ell)
\\ &\quad + [p_{i}(1-p_j)+p_j(1-p_i)]\sum_{S\in\binom{[m]\setminus\{i, j\}}{k-1}}\prod_{\ell\in S}p_\ell\prod_{\ell\not\in S\cup\{i, j\}}(1-p_\ell)
\\ & \quad + (1-p_i)(1-p_j)\sum_{S\in \binom{[m]\setminus\{i, j\}}{s}}\prod_{\ell\in S}p_\ell\prod_{\ell\not\in S\cup\{i, j\}}(1-p_\ell)
\end{align*}
Viewing the other $m-2$ many $p_\ell$'s as fixed constants, this is a symmetric, quadratic, and multilinear function in $(p_i, p_j)$. Therefore, $(p_i, p_j)$ maximizes A symmetric, quadratic, and multilinear function $\mb{P}(Y'\le t-a)$ subject to the constraint that $p_i+p_j$ is fixed. Maximizing such of function is equivalent then to maximizing or minimizing $p_ip_j$. The only possible optimum where $p_i, p_j \not\in \{0, 1\}$ is $p_i=p_j$. This holds for every pair of indices $i, j$ where $p_i, p_j\not\in \{0, 1\}$, so all fractional $p_i$'s must equal some $p\in (0, 1)$. This is precisely the distribution of $Z = a + Bin(m,p)$, where the choice of $p=(\mu-a)/m$ corresponds to the fixed sum constraint that $\sum_{i=1}^n p_i = \mu$.
\end{proof}

\begin{proof}[Proof of \cref{claim:tail-pois}]
The bounds for $\mu=0$ and $\mu=1$ are clear. It suffices to show $f_1(\mu)\le f_1(2)$ and $f_2(\mu)\le f_2(2)$ for $\mu\ge 2$ where $f_\ell(\mu)$ is defined in \cref{eq:flk}.
Let $s=\mu-\ell$. By the correspondence of lower tails of the Poisson distributions and upper tails of the Gamma distribution, we have that
\[f_\ell(\mu) = \frac{1}{s!}\int_{\alpha \mu}^\infty x^se^{-x}dx.\]
Let $g_s(x)=-\log(x^se^{-x}) = x-s\log x$. Then, $g_s'(x)=1-s/x$ and $g_s''(x)=s/x^2\ge 0$. Hence 
\[ g_s(x)\ge g_s(\alpha \mu)+g_s'(\alpha \mu)(x-\alpha \mu)\]
for $x\ge \alpha \mu$. Therefore, using $s!\ge s^se^{1-s}$, $s\le \mu$, and $\alpha \ge 2$, we have
\[ f_\ell(\mu)\le \frac{1}{s!}\int_{\alpha \mu}^\infty (\alpha \mu)^se^{-\alpha \mu}e^{(-1+s/\alpha \mu)(x-\alpha \mu)}dx \le \frac{(\alpha \mu)^se^{-\alpha \mu}}{s!(1-s/\alpha \mu)}\le \frac{e^{s-1}(\alpha \mu)^{s+1}e^{-\alpha \mu}}{s^s(\alpha \mu -s)}. \]
When $\ell=1$ and $s=\mu-1$, using $(1+1/(\mu-1))^{\mu-1}\le e$, we have
\[ f_1(\mu)\le \frac{e^{\mu-2}(\alpha \mu)^\mu e^{-\alpha \mu}}{(\mu-1)^{\mu-1}(\alpha \mu - \mu+1)}\le \frac{e^{\mu-1}\mu\alpha^\mu e^{-\alpha \mu}}{\alpha \mu-\mu+1}\le \frac{\mu\left(\alpha e^{1-\alpha}\right)^\mu}{e(\mu+1)}\]
We check for $\mu\ge 4$ that $(\alpha e^{1-\alpha})^\mu e^{2\alpha}/(1+2\alpha)$, which is the right-hand side divided by $f_1(2)$ up to a constant factor, is decreasing for $\alpha \ge 2$, and the right-hand side is at most $f_1(2)$ for $\mu=4$, so $f_1(\mu)\le f_1(2)$ for $\mu\ge 4$. We check $f_1(3)\le f_1(2)$ below.

\begin{center}
\includegraphics[height = 4cm]{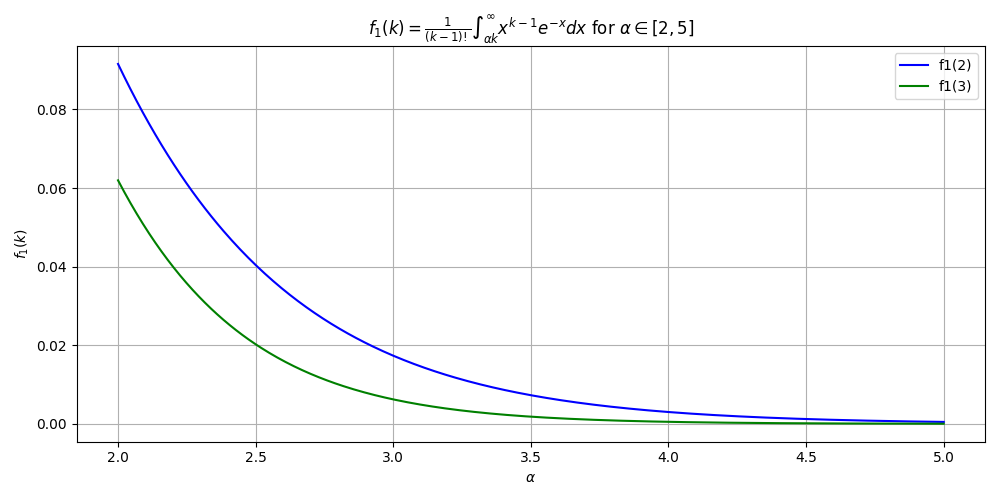}
\, 
\includegraphics[height = 4cm]{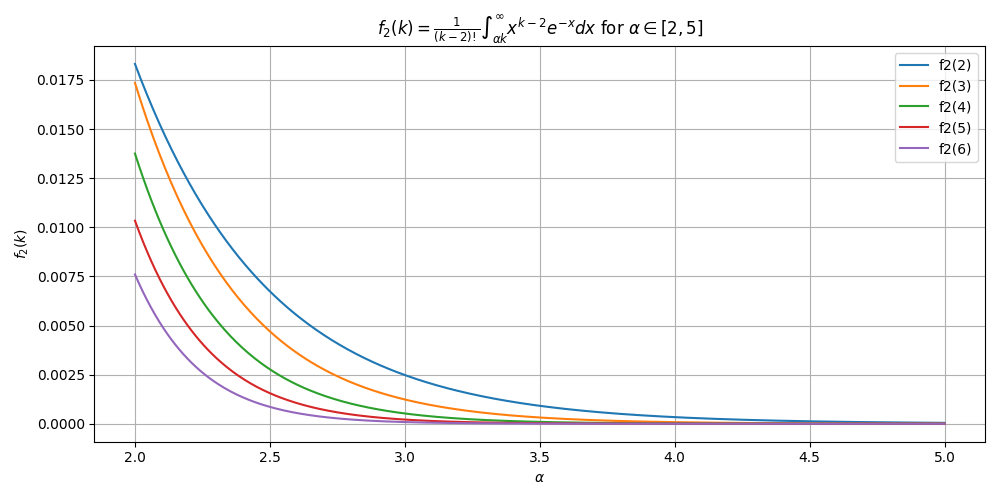}
\end{center}

When $\ell = 2$ and $s=\mu-2$, using $(1+2/(\mu-2))^{\mu-2}\le e^2$, we have
\[ f_2(\mu)\le \frac{e^{\mu-3}(\alpha \mu)^{\mu-1}e^{-\alpha \mu}}{(\mu-2)^{\mu-2}(\alpha \mu - \mu +2)}\le \frac{e^{\mu-1}\mu\alpha ^{\mu-1}e^{-\alpha \mu}}{\alpha \mu-\mu+2}\le \frac{\mu\left(\alpha e^{1-\alpha}\right)^\mu}{2e(\mu+2)}\]
We check for $\mu\ge 4$ that $(\alpha e^{1-\alpha})^\mu e^{2\alpha}$, which is the right-hand side divided by $f_2(2)$ up to a constant factor, is decreasing for $\alpha \ge 2$, and the right-hand side is at most $f_2(2)$ for $\mu=7$, so $f_1(\mu)\le f_1(2)$ for $\mu\ge 7$. We check $f_1(\mu)\le f_1(2)$ for $\mu=3, 4, 5, 6$ above.
\end{proof}

\begin{proof}[Proof of \cref{claim:tail-bin}]
For any $\ell \le \mu $ (in particular $\ell=\mu -1, \mu -2$), we bound
\[\mb{P}(Y=\ell) = \binom{\alpha \mu /p}{\ell} p^\ell (1-p)^{\alpha \mu /p-\ell} \le \frac{( \alpha \mu )^\ell}{\ell!}f_{\alpha, \mu }(p)\quad\text{where}\quad f_{\alpha, \mu }(p)=(1-p)^{\alpha\mu /p-\mu }. \]
We claim that $f_{\alpha, \mu} '(p)\le 0$ for $p\in [0, 1]$. We compute
\[f_{\alpha, \mu }'(p)  = -\frac{\alpha \mu }{p^2}\log (1-p)-\left(\frac{\alpha \mu }{p}-\mu \right)\frac{1}{1-p} \]
so it suffices to show $g(\alpha, p)=\alpha p^{-2}\log (1-p)+(\alpha/p-1)/(1-p)\ge 0$. Finally, we verify that 
\[ \frac{\partial g}{ \partial\alpha} (\alpha, p)= \frac{\ln\left(1-p\right)}{p^{2}}+\frac{1}{p\left(1-p\right)}\ge 0\quad \text{and}\quad g(2, p)=\frac{2}{p}\ln\left(1-p\right)+\frac{2-p}{1-p}\ge 0 \]
for $p\ge 0$, so $g(\alpha, p)\ge g(2, p)\ge 0$ for all $\alpha\ge 2$. Hence, $f_{\alpha, \mu }(p)\le 0$ for all $\alpha \ge 2$ and $p\in[0, 1]$, so
\[ f_{\alpha, \mu }(p)\le \lim_{p\to 0^+} f_{\alpha, \mu }(p) =e^{-\alpha \mu }\implies \mb{P}(Y = \ell) \le e^{-\alpha \mu } \frac{(\alpha \mu )^\ell}{\ell!}\]
This establishes the Poisson approximation as an upper bound of the probability mass at $\ell\le \mu $. For this claim, $\ell=\mu -1$ and $\ell=\mu -2$, so applying \cref{claim:tail-pois} suffices.
\end{proof}
\begin{proof}[Proof of \cref{thm:main-tail}]
By a simple coupling argument, it suffices to show the case where $\mu $ is a positive integer and $\sum_{i=1}^n p_i=\alpha \mu $. Apply \cref{claim:tail-smth} to reduce to the case $Y\sim a+\on{Bin}((\alpha \mu -a)/p, p)$ for some $a\in\mb{Z}_{\ge 0}$ and $p\in [0, 1]$.
The $a=0$ case holds by \cref{claim:tail-bin}. It suffices to show the case for any $a\ge 1$, let $\mu '=\mu -a$. Then, by the $a=0$ case applied to $\mu '$
\begin{align*}
\mb{P}\left(Y\le \mu -1\right) & = \mb{P}\left(a+\on{Bin}\left(\frac{\alpha \mu -a}{p}, p\right)\le \mu -1\right)
\\ & =\mb{P}\left(\on{Bin}\left(\frac{\alpha \mu '+a(\alpha-1)}{p}, p\right)\le \mu '-1\right)
\\ & \le \mb{P}\left(\on{Bin}\left(\frac{\alpha \mu '}{p}, p\right)\le \mu '-1\right)
\end{align*}
If $\mu '\ge 2$, this is at most $(2\alpha+1)e^{-2\alpha}$ by \cref{claim:tail-bin}, as desired. If $\mu '=0$, the probabilities are all zero, so the bound also holds trivially. If $\mu '=1$, we know by the inequality $1-x\le e^{-x}$ that
\begin{align*}
\mb{P}\left(Y\le \mu -1\right) & =\mb{P}\left(\on{Bin}\left(\frac{\alpha +a(\alpha-1)}{p}, p\right)= 0\right) \\
& = (1-p)^{(\alpha+a(\alpha-1))/p}
\\ & \le e^{-\alpha-a(\alpha-1)}
\\ & = e^{1-2\alpha-(\alpha-1)(a-1)}
\\ &\le (1+2\alpha)e^{-2\alpha}
\end{align*}
since $e\le 1+2\alpha$ and $a, \alpha \ge 1$. This proves the first bound in \cref{thm:main-tail}. For the second bound\begin{align*}
\mb{P}\left(Y\le \mu -2\right) & = \mb{P}\left(a+\on{Bin}\left(\frac{\alpha \mu -a}{p}, p\right)\le \mu -2\right)
\\ & =\mb{P}\left(\on{Bin}\left(\frac{\alpha \mu '+a(\alpha-1)}{p}, p\right)\le \mu '-2\right)
\\ & \le \mb{P}\left(\on{Bin}\left(\frac{\alpha \mu '}{p}, p\right)\le \mu '-2\right).
\end{align*}
If $\mu '\ge 2$, this is at most $e^{-2\alpha}$ by \cref{claim:tail-bin}, as desired. If $\mu '\le 1$, the probabilities are all zero, so the bound holds trivially. This proves \cref{thm:main-tail} entirely.
\end{proof}

 


\end{document}